\documentclass[a4paper,UKenglish,cleveref,autoref,thm-restate]{lipics-v2021}
\pdfoutput=1

\hideLIPIcs
\nolinenumbers
\hypersetup{hidelinks}
\newcommand{\Nbb}{\ensuremath{\mathbb{N}}}
\newcommand{\Pbb}{\ensuremath{\mathbb{P}}}
\newcommand{\Acal}{\ensuremath{\mathcal{A}}}
\newcommand{\intn}[2]{\ensuremath{\{#1,\dots,#2\}}}

\title{On Periodic and Aperiodic Optimal Strategies in Solvency Games}
\titlerunning{Optimal Strategies in Solvency Games}

\author{Quentin Guilmant}{IRIF, Universit\'e Paris Cit\'e, France}{}{}{}
\author{Florian Luca}{Mathematics Division, Stellenbosch University, South Africa \and
  Max Planck Institute for Software Systems, Saarland Informatics Campus, Germany}{}{}{}
\author{Richard Mayr}{School of Informatics, University of Edinburgh, UK}{}{}{}
\author{Jo\"el Ouaknine}{Max Planck Institute for Software Systems,
  Saarland Informatics Campus, Germany}{}{}{}
\author{James Worrell}{Department of Computer Science, Oxford
  University, UK}{}{}{}

\authorrunning{Q.~Guilmant, F.~Luca, R.~Mayr, J.~Ouaknine, J.~Worrell}
\Copyright{Q.~Guilmant, F.~Luca, R.~Mayr, J.~Ouaknine, J.~Worrell}
\ccsdesc{Theory of computation~Probabilistic computation}
\ccsdesc{Theory of computation~Computability}
\ccsdesc{Computing methodologies~Stochastic games}
\keywords{Solvency games, MDPs}

\begin{document}

\begin{titlepage}
\maketitle
\begin{abstract}
Solvency games are a gambling problem on infinite-state Markov
decision processes in which the state $n \in \Nbb$ represents an investor's fortune.
In every round, the investor chooses an action from a finite action set,
and every action yields a distribution over integer-valued gains
in an interval $\intn{-\ell}m$.
The risk-averse investor wants to minimise the
probability of eventual ruin (reaching a fortune $\le 0$).

It was shown in~\cite{BKSV2008}
that memoryless deterministic optimal strategies exist,
but they are not eventually constant in general.
Even in the special case of gains in $\intn{-2}1$, the optimal strategy 
may need to make use of two different actions at arbitrarily high fortunes.

We show that optimal strategies in solvency games need not
be ultimately periodic in general (thus disproving a 2012 conjecture of Ku\v{c}era~\cite[Section~3.6]{Kucera:RP2012}).
Already in the case of gains in $\intn{-3}1$, it is possible
for the optimal strategy to be unique but aperiodic.

For gains in $\intn{-2}1$, there always exists an ultimately periodic
optimal strategy whose tail is constant or alternates between two actions.

Finally, we show that the optimal strategy is computable if it is unique.
Moreover, (some) optimal strategy can always be computed
in the case of gains in $\intn{-\ell}1$ for any $\ell \in
\Nbb$. Computability in the general case however remains open.
\end{abstract}
\end{titlepage}

\section{Introduction}\label{sec:intro}
\subparagraph{Background.}
Decision making under uncertainty is a fundamental problem
studied in computer science, operations research, and game theory.
We study solvency games, a very simple model in which a single player
(aka the investor) wants to maximise the probability of perpetual
solvency, i.e., to minimise the probability of eventual ruin.
Even in this very simple setting, optimal strategies may need to
be rather complex.

A solvency game can be described as a gambling problem on an
infinite-state MDP\@. The states of this MDP are numbers $n \in \Nbb$
which correspond to the current fortune of the investor.
In every round of the game, the investor chooses an action from a finite action set
$\Acal = \{A,B,\dots\}$.
Each action yields a distribution over integer-valued gains
in a bounded interval $\intn{-\ell}m$.
The new fortune is then the old fortune plus the gain.
The risk-averse investor wants to minimise the
probability of ruin (eventually reaching a fortune $\le 0$).

There always exist
optimal deterministic strategies that depend only on the current
fortune (i.e., they are memoryless)~\cite{BKSV2008}.
Such strategies can be described by functions $\sigma: \Nbb_{>0} \rightarrow \Acal$.
While having a large fortune can reduce the risk of ruin,
maximising the expected fortune in the long run does \textbf{not}
necessarily coincide with minimising the risk of eventual ruin.
For example, suppose action $A$ yields gains $+10$ and $-1$, each with
probability $1/2$, while action $B$ yields gains $+1$ and $-1$ with
probabilities $3/4$ and $1/4$, respectively. The expected gain of $A$
is $9/2$, compared with $1/2$ for $B$. Nevertheless, from fortune $1$,
choosing $A$ causes immediate ruin with probability $1/2$, whereas
always choosing $B$ gives an eventual ruin probability of only $1/3$.
An optimal strategy must therefore choose $B$ at fortune $1$, despite
its smaller expected gain.

Still, it seemed plausible that, at least at fortunes $n$ \textbf{above} some finite
threshold $n_0$, an optimal strategy $\sigma$ could always play an action with maximal
expected gain, i.e., $\sigma(n)=A$ for all $n \ge n_0$.
Such strategies are also called \textbf{pure-tail} or \textbf{rich-man's}
strategies.
It is easy to see that there always exist $\varepsilon$-optimal pure-tail
strategies for every $\varepsilon >0$, and that such strategies are
straightforward to elicit.
If at least one action $A$ yields a strictly positive expected
gain then some strategy that plays $A$ at all fortunes $n \ge n_0$
(where $n_0$ depends on $\varepsilon$) can be $\varepsilon$-optimal
everywhere, since the risk of eventual ruin from fortune $n$ converges to $0$
as $n \rightarrow \infty$~\cite[Fact~3]{BKSV2008}.
Otherwise, if all actions have an expected gain $\le 0$, then eventual ruin
happens almost surely for every strategy
(except in the degenerate case where some action has exactly gain $0$ only).
For a quantitative analysis of these approximation properties see~\cite{BBEK:IC2013}.

However, optimal strategies do not have the same
properties as $\varepsilon$-optimal strategies.
In particular, optimal pure-tail strategies
do not always exist~\cite{BKSV2008}.
Surprisingly, even in the special case of
two actions and gains in $\intn{-2}1$, the optimal strategy 
may need to use both actions at arbitrarily high fortunes,
even though one action has a sub-optimal expected gain.
The counterexample from~\cite{BKSV2008} does not show whether the two actions appear in any
regular pattern for increasing fortunes $n$ in the optimal strategy.
Ku\v{c}era conjectured in~\cite[Section~3.6]{Kucera:RP2012} that
there always exist optimal strategies
that are ultimately periodic, i.e., eventually repeat a finite pattern
of actions for increasing fortunes $n$. From an algorithmic
standpoint, this conjecture has one key consequence: as
we note in~\Cref{subsec:exact_algorithm},
given any fixed ultimately periodic optimal strategy $\sigma$, for
any fortune $n$, the optimal move $\sigma(n)$ can be computed in time
polynomial in the bitsize of $n$.

\subparagraph{Our contributions.}
Our main contribution is to disprove Ku\v{c}era's conjecture.
Already in the case of gains in $\intn{-3}1$
and action sets of size two, optimal strategies may need to be
significantly more complex.
In \Cref{sec:aperiodic} we construct an example where the
optimal strategy is unique but aperiodic. In general, aperiodic
optimal strategies for $(3,1)$-solvency games are intimately 
connected to orbits of certain piecewise-linear discrete dynamical
systems, whose behaviour is notoriously difficult to predict. As a
result, we conjecture that optimal strategies of fixed solvency games cannot
in general be computed in polynomial time (as a function of the
bitsize of the input fortune).\footnote{We note that the aperiodic
  optimal strategy that we exhibit in~\Cref{sec:aperiodic} \emph{does}
  admit a polynomial-time algorithm for computing the optimal move at
  fortune $n$ (as a function of the bitsize of $n$), as we show
  in~Appendix~\ref{app:aperiodic-polytime}. Nevertheless, this
  agreeable state of affairs derives from some particularly favourable
  characteristics of the optimal strategy, which cannot be expected to
  hold in general.}

For gains in $\intn{-2}1$, every game does have an ultimately periodic optimal
strategy whose tail is constant or alternates between two actions,
regardless of the size of the action set. When all actions have positive
expected gain and share the same primary root, an optimal strategy can
be chosen to be periodic with period at most $2$ from the outset;
see~\Cref{sec:21solvency}. The primary root measures the exponential decay
of ruin probabilities under repeated use of an action; it is defined in
\Cref{sec:defs}.

Finally, we present some results about the computability of optimal
strategies, i.e., finding a computable function $\sigma: \Nbb_{>0} \rightarrow \Acal$
that corresponds to a memoryless deterministic optimal strategy for a
given solvency game.
In the special case where this optimal strategy $\sigma$ is unique,
$\sigma(n)$ is computable. This is due to the fact that, for every $n$,
the value of the solvency objective at fortune $n$ can be effectively
approximated arbitrarily closely~\cite{BBEK:IC2013}.
However, this method does not yield any complexity bounds.

Even when the memoryless deterministic
optimal strategy is not unique,
it is possible to find some optimal strategy $\sigma: \Nbb_{>0} \rightarrow \Acal$
such that $\sigma(n)$ is computable for every $n \in \Nbb_{>0}$,
provided that the gains are in $\intn{-\ell}1$ for some $\ell \in \Nbb$.
Here $\sigma(n)$ can be computed in time polynomial in $n$ (rather
than the bitsize of $n$).

Computability of optimal strategies remains an open question
in the general case with gains in $\intn{-\ell}m$ for $m>1$.

\subparagraph{Related work.}
Many works in gambling theory also consider the problem of minimising
the risk of ruin, e.g.,~\cite{Browne:1995,DubbinsSavage:2014,Ferguson:1965,Hill:99,Truelove:1970}.
However, their models differ from ours. Instead of choosing between different
actions, their investor chooses which part of his or her fortune to gamble on a fixed
action. Additionally, some other models allow borrowing
or pay interest on unused capital~\cite{brazdil_et_al:LIPIcs.FSTTCS.2013.487}.

More closely related to our work are finite-state Markov decision processes
with integer rewards~\cite{theoretics:11035,Puterman:book}.
One can interpret the total reward (the sum of all
rewards so far in the run) as the investor's fortune.
Alternatively, one can consider the fortune as part of the (now infinite)
state space of an MDP\@. In other words, a state of the MDP is described by a pair $(s,n)$
where $s$ is one of finitely many control states and $n \in \Nbb$ is the
fortune. Using terminology from automata theory, this is called a \textbf{one-counter
MDP}~\cite{ajdarow_et_al:LIPIcs.ICALP.2025.138,BBEK:IC2013}.
This model is strictly more general than ours, due to the additional
control states, i.e., our model corresponds to the subclass of
one-counter MDPs with just one control state.

\subparagraph{Statement on AI use.}
The main results of this paper, specifically~\Cref{thm:main-aperiodic},
\Cref{cor:21-ultimately-periodic}, \Cref{thm:compute-upward-one}, and
\Cref{thm:unique-computable}, were obtained in 2025 without any AI
assistance. In revising this work for the present paper, we used
OpenAI's GPT-6 Astra to assist with reorganising, streamlining, and simplifying the
exposition; checking, and occasionally correcting, mathematical arguments, including
the aperiodicity construction and proof, the general periodicity result for $(2,1)$-solvency games, and
the algorithm and proof for polynomial-time evaluation of the
aperiodic strategy in Appendix~\ref{app:aperiodic-polytime}; generating and running code to check
algebraic calculations and finite prefixes of strategy sequences; and
locating, checking, and standardising bibliographic
references. Feedback generated by Anthropic's Fable~5.1 also contributed
to identifying a missing justification concerning the moduli of
characteristic roots. The authors take responsibility for the
correctness, originality, and presentation of the final paper.

\section{Solvency games and optimal strategies}\label{sec:defs}

\subsection{The model and the objective}

We first make the gambling model precise. We write
$\Nbb=\{0,1,2,\ldots\}$ and $\Nbb_{>0}=\{1,2,\ldots\}$.

\begin{definition}\label{def:solvency-game}
An \textbf{$(\ell,m)$-solvency game}, where $\ell,m\in\Nbb$, has a
finite nonempty set $\Acal$ of actions. Each action $A$ specifies a
probability distribution $\Pbb_A$ on the integer gains
$\{-\ell,\ldots,m\}$.

At a positive fortune $n$, the player chooses an action $A$ and receives
a gain $k$ drawn from $\Pbb_A$. The new fortune is $n+k$, unless this
is nonpositive, in which case the player is ruined and play ends.
We record all ruined fortunes as state $0$; thus the state update is
$n\mapsto\max\{0,n+k\}$ and the state space is $\Nbb$.
\end{definition}

The bounds $\ell$ and $m$ need not be attained by every action.
Probabilities may be real numbers; for our algorithmic results, we
assume that they are rational and given explicitly.

A \textbf{strategy} is a rule for choosing actions. It may use the whole
history of play and may randomise between actions. For a strategy
$\sigma$, let $p^\sigma(n)$ be the probability of eventual ruin when
starting with fortune $n$ and following $\sigma$. The best (smallest) possible
ruin probability is
\[
p_{\mathrm{opt}}(n):=\inf_\sigma p^\sigma(n).
\]
We seek an \textbf{optimal} strategy: a single rule that attains this
value from every positive fortune. For convenience, we put
$p^\sigma(n)=p_{\mathrm{opt}}(n)=1$ whenever $n\leq0$.
For example, a gain of $-2$ at fortune $1$ contributes a ruin
probability of $p_{\mathrm{opt}}(-1)=1$. These boundary values let us
use $n+k$ in formulas even when a loss takes us below zero.

\subparagraph{Looking one round ahead.}
Suppose we choose action $A$ at fortune $n$. With probability
$\Pbb_A(k)$, the gain is $k$ and the new fortune is $n+k$.
The best future ruin probability from there is $p_{\mathrm{opt}}(n+k)$.
Averaging over the possible gains gives the risk associated with
choosing $A$ first. Taking the smallest of these averages gives the
\textbf{Bellman equation}:
\begin{equation}\label{eq:bellman}
p_{\mathrm{opt}}(n)=
\min_{A\in\Acal}\sum_{k=-\ell}^{m}\Pbb_A(k)p_{\mathrm{opt}}(n+k)
\qquad(n>0).
\end{equation}
For example, choosing action $B$ from the introduction at fortune $1$
gives the expression
\[
\frac14+\frac34p_{\mathrm{opt}}(2).
\]
The first term accounts for immediate ruin; the second accounts for
reaching fortune $2$ and continuing from there.

This reasoning does not yet require an optimal strategy to exist.
After each possible gain, we can choose a continuation whose ruin
probability is as close as we wish to the corresponding infimum.
Also, randomising the first action only averages the risks of those
actions, so cannot improve on the best one.

A \textbf{memoryless deterministic} strategy simply fixes one action
for each positive fortune. It uses neither the earlier history nor
random choices, and is therefore a function
$\sigma:\Nbb_{>0}\to\Acal$. The Bellman equation suggests choosing a
minimising action at each fortune. The following argument shows that
these choices really do give an optimal strategy for the whole game.

\begin{proposition}[{cf.~\cite[Proposition~7]{BKSV2008}}]\label{prop:MD-optimal}
Any memoryless deterministic strategy that chooses a minimising
action in~\eqref{eq:bellman} at every positive fortune is optimal.
In particular, an optimal memoryless deterministic strategy exists.
\end{proposition}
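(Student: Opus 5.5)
The plan is to compare the ruin probability of the greedy strategy with $p_{\mathrm{opt}}$ through \emph{bounded-horizon} ruin probabilities. Fix a memoryless deterministic strategy $\sigma:\Nbb_{>0}\to\Acal$ such that, for every $n>0$, $\sigma(n)$ attains the minimum in~\eqref{eq:bellman}. Such a $\sigma$ exists because $\Acal$ is finite, so the minimum is always attained. This gives the ``in particular'' part once optimality is shown. Since $p_{\mathrm{opt}}(n)\le p^\sigma(n)$ holds trivially by definition of the infimum, the whole task is the reverse inequality $p^\sigma(n)\le p_{\mathrm{opt}}(n)$.

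For $k\in\Nbb$, let $r_k(n)$ be the probability that, starting from $n$ and playing $\sigma$, ruin occurs within the first $k$ rounds. Set $r_k(n)=1$ for $n\le 0$. The ruin events within $k$ rounds increase in $k$ and their union is eventual ruin. Hence, by continuity of probability measures, $r_k(n)\uparrow p^\sigma(n)$ as $k\to\infty$. Moreover, $r_0(n)=[n\le 0]$, and conditioning on the first gain gives, for $n>0$, the recursion
\[
r_{k+1}(n)=\sum_{j=-\ell}^{m}\Pbb_{\sigma(n)}(j)\,r_k(n+j).
\]
This uses that $\sigma$ is memoryless, so play after the first round is again $\sigma$ started from $n+j$.

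Next I will show $r_k\le p_{\mathrm{opt}}$ pointwise, by induction on $k$. The base case holds because $r_0(n)=0$ for $n>0$, while both sides equal $1$ for $n\le0$. For the inductive step, let $n>0$. Using the recursion, the induction hypothesis, and the choice of $\sigma(n)$ as a minimiser in~\eqref{eq:bellman}, we get
\[
r_{k+1}(n)\le\sum_{j=-\ell}^{m}\Pbb_{\sigma(n)}(j)\,p_{\mathrm{opt}}(n+j)=p_{\mathrm{opt}}(n).
\]
For $n\le 0$ the inequality is the equality $1=1$. Letting $k\to\infty$ then yields $p^\sigma\le p_{\mathrm{opt}}$, and hence $p^\sigma=p_{\mathrm{opt}}$ at every fortune, so $\sigma$ is optimal.

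This argument shows that $p^\sigma$ is the least fixed point of the one-step operator of $\sigma$ with boundary value $1$. It works precisely because we are \emph{minimising} the probability of a reachability event. I expect the only delicate point to be justifying the two facts used about $r_k$: the first-step recursion, and monotone convergence to $p^\sigma$. Both are standard properties of the Markov chain induced by a memoryless strategy. The Bellman equation~\eqref{eq:bellman} for $p_{\mathrm{opt}}$ is taken from the preceding discussion. No analogous greatest-fixed-point issue arises, so there is no need to rule out strategies that ``postpone'' ruin forever.
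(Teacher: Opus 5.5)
Your proof is correct and follows essentially the same route as the paper. You fix a Bellman minimiser at each fortune, show by induction that the probability of ruin within $k$ rounds under $\sigma$ is at most $p_{\mathrm{opt}}$, and then pass to the limit, using that these bounded-horizon ruin probabilities increase to $p^\sigma$.
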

\begin{proof}
There are finitely many actions, so we can choose a minimiser
$\sigma(n)$ at every $n>0$. Fix these choices. Let $q_t(n)$ be the
probability of ruin within the first $t$ rounds under $\sigma$,
with $q_t(n)=1$ for $n\leq0$.
We show that $q_t(n)\leq p_{\mathrm{opt}}(n)$ for every $t$ and $n$.

For $t=0$, this holds because a positive initial fortune is not yet
ruined. For the next round, suppose the bound already holds for $t$.
After the first gain $k$, there are $t$ rounds left, so for $n>0$,
\begin{align*}
q_{t+1}(n)
&=\sum_{k=-\ell}^{m}\Pbb_{\sigma(n)}(k)q_t(n+k)\\
&\leq\sum_{k=-\ell}^{m}\Pbb_{\sigma(n)}(k)p_{\mathrm{opt}}(n+k)
=p_{\mathrm{opt}}(n).
\end{align*}
The inequality uses the bound for $t$ rounds; the final equality uses
our choice of $\sigma(n)$ in the Bellman equation.
This proves the bound for every finite number of rounds.

Every play that ends in ruin does so after finitely many rounds.
Thus $q_t(n)$ increases to $p^\sigma(n)$ as $t$ grows, giving
$p^\sigma(n)\leq p_{\mathrm{opt}}(n)$.
The reverse inequality follows from the definition of
$p_{\mathrm{opt}}$, so $\sigma$ is optimal.
\end{proof}

Henceforth, \emph{strategy} means a memoryless deterministic strategy.
We view it as the sequence $\sigma(1)\sigma(2)\ldots$, indexed by
\emph{fortune}. For example, $\mathit{ABABAB}\ldots$ chooses $A$ at odd fortunes
and $B$ at even fortunes. A strategy has a \textbf{pure tail} if this
sequence is eventually constant. More generally, it is
\textbf{ultimately periodic} if some finite block repeats from some
point onwards: there are integers $d,n_0\geq1$ such that
$\sigma(n+d)=\sigma(n)$ for all $n\geq n_0$.
We call it \textbf{aperiodic} if it is not ultimately periodic.

\subsection{Simplifying the game}

An action with no negative gain keeps the player solvent forever.
We set aside this trivial case and assume that every action has a
negative gain with positive probability. We can then simplify the
game by removing waiting rounds and duplicate actions.

Next, consider a round with gain zero. The fortune has not changed,
so a memoryless strategy chooses the same action again. We can skip
these waiting rounds and record only the first nonzero gain. Its
distribution is
\[
\Pbb_{A'}(0)=0,\qquad
\Pbb_{A'}(k)=\frac{\Pbb_A(k)}{1-\Pbb_A(0)}\quad(k\neq0).
\]
The division preserves the relative probabilities of the nonzero
gains and makes their total probability $1$. For instance, an action
with probabilities $1/8,1/2,3/8$ for gains $-1,0,+1$ becomes an action
with probabilities $1/4,3/4$ for gains $-1,+1$.
A negative gain is possible, so $\Pbb_A(0)<1$: with probability $1$,
the waiting eventually ends.

Finally, if two actions now have the same distribution, we keep just
one of them. Removing waiting rounds and duplicate action names
preserves the ruin probabilities of corresponding memoryless
strategies, and hence the optimal values. To use a strategy in the
original game, we simply replace each retained action by one of the
original actions it represents.
From now on, $\Acal$ denotes the resulting action set: its distributions
are distinct and have no zero gains. Any claim of uniqueness refers
to this simplified game.

\subparagraph{Expected gains.}
The \textbf{expected gain}, also called the \textbf{drift}, of action $A$ is
\[
\mu_A:=\sum_{k=-\ell}^{m}k\Pbb_A(k).
\]
As is classically known in gambling theory, if every action has nonpositive expected gain, ruin occurs with
probability $1$ under every strategy. To see the idea, start at
fortune $n$ and stop at ruin or on reaching a fortune of at least $L>n$.
Retain any final negative fortune. Over any finite number of rounds,
the expected stopped fortune is at most $n$. If $q$ is the probability
of reaching the upper level by then before ruin, this expectation is
at least $qL+(1-q)(-\ell) = qL-(1-q)\ell$, since the fortune is at least $L$ on that
event and at least $-\ell$ otherwise. Thus
\[
q\leq\frac{n+\ell}{L+\ell},
\]
also when arbitrarily many rounds are allowed. Meanwhile, play cannot
remain forever between $1$ and $L-1$. Each of the finitely many
actions permits a loss, so every block of $L$ rounds has a fixed
positive chance of losses throughout and hence ruin. Survival
therefore requires reaching or exceeding every level $L$, but the
bound above tends to zero as $L$ grows.

We henceforth assume that at least one action has positive expected
gain; other actions may still have nonpositive expected gain.
Imagine drawing independent gains from this action indefinitely,
even after a hypothetical ruin. The average gain per round converges
to its positive expected gain with probability $1$ (the law of large
numbers). The cumulative gain therefore tends to $+\infty$ and has
a finite minimum along almost every such sequence. Increasing the
initial fortune makes it more likely to cover this worst loss, so
the probability of ruin tends to zero. Optimal play does at least
as well, giving
\begin{equation}\label{eq:ruin-limit}
\lim_{n\to\infty}p_{\mathrm{opt}}(n)=0.
\end{equation}

These assumptions imply $\ell,m\ge1$: every action permits a negative
gain, and at least one action permits a positive gain.

\subsection{Primary roots and pure tails}

We next recall how polynomials help describe the risk of repeatedly
using a single action. Write $p^A(n)$ for the ruin probability when
always choosing $A$. Looking one round ahead gives
\[
p^A(n)=\sum_{k=-\ell}^{m}\Pbb_A(k)p^A(n+k)\qquad(n>0).
\]
Unlike the Bellman equation, this recurrence has no minimum: the
action is fixed.

For a concrete example, take action $B$ from the introduction, which
gains $1$ with probability $3/4$ and loses $1$ with probability $1/4$.
Writing $b_n:=p^B(n)$, the recurrence becomes
\[
b_n=\frac34b_{n+1}+\frac14b_{n-1},
\qquad\text{or equivalently}\qquad
b_{n+1}-b_n=\frac13(b_n-b_{n-1}).
\]
Thus each successive drop in ruin probability is one third of the
previous drop. These drops form a geometric series. Since $b_0=1$
and $b_n$ tends to zero, their total is $1$, giving
$b_n=(1/3)^n$. Each extra unit of fortune divides the risk by three.

This example suggests looking for sequences of the form $x^n$:
increasing the fortune by one then multiplies the value by the fixed
factor $x$. For $n>\ell$, all fortunes in the recurrence are positive.
Substituting $x^n$, with $x\neq0$, and dividing by $x^n$ gives
\[
x^n=\sum_{k=-\ell}^{m}\Pbb_A(k)x^{n+k}
\qquad\Longleftrightarrow\qquad
1=\sum_{k=-\ell}^{m}\Pbb_A(k)x^k.
\]
Multiplication by $x^\ell$ removes the negative powers. This gives
the polynomial used to describe these geometric solutions.

\begin{definition}\label{def:char-poly}
Put
\[
F_A^{\ell,m}(x):=\sum_{k=-\ell}^{m}\Pbb_A(k)x^{k+\ell}-x^\ell.
\]
The \textbf{characteristic polynomial} of action $A$ is
$\chi_A^{\ell,m}(x):=F_A^{\ell,m}(x)/a_A$, where $a_A$ is the
coefficient of the highest power of $x$ present in $F_A^{\ell,m}$.
We omit the superscripts when the bounds are clear.
\end{definition}

A negative gain contributes a nonzero term below degree $\ell$, so
$F_A$ is not the zero polynomial. Dividing by $a_A$ makes its highest
coefficient equal to $1$, i.e., the polynomial is monic. This scaling does not change the roots.
When $\Pbb_A(m)>0$, we simply have $a_A=\Pbb_A(m)$.

The number $1$ is always a root, since the probabilities sum to $1$.
It corresponds to the constant sequence $1^n=1$, which satisfies
the recurrence but does not decay towards zero. For an action with
positive expected gain, a root between $0$ and $1$ describes the
decay of the actual ruin probabilities.

\begin{proposition}[{cf.~\cite[Lemma~1]{BKSV2008}}]\label{prop:PrimaryRoot}
If $A$ has positive drift, $\chi_A$ has a unique root $c_A\in(0,1)$,
called its \textbf{primary root}. If $m=1$, every other root
$r\notin\{1,c_A\}$ satisfies $|r|<c_A$.
\end{proposition}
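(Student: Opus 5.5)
We work with the rational function $G(x):=\sum_{k=-\ell}^{m}\Pbb_A(k)x^k$ on $x>0$. Its roots in $(0,\infty)$ other than $x=0$ are exactly the positive roots of $F_A$, since $F_A(x)=x^\ell(G(x)-1)$; the same holds for $\chi_A$, as $a_A>0$. We have $G(1)=1$. Because some negative gain has positive probability, $G(x)\to+\infty$ as $x\to0^+$. Moreover $G''(x)=\sum_k k(k-1)\Pbb_A(k)x^{k-2}$. Every term with $k\le -1$ is strictly positive, and every term with $k\ge2$ is nonnegative, so $G$ is strictly convex on $(0,\infty)$. Finally $G'(1)=\mu_A>0$.

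\emph{Existence and uniqueness of $c_A$.} Since $G'(1)>0$, we have $G(x)<1$ for $x$ slightly below $1$. Together with $G\to+\infty$ at $0^+$, the intermediate value theorem gives a root of $G-1$ in $(0,1)$. A strictly convex function takes any given value at most twice. The value $1$ is already attained at $x=1$, so there is at most one further point with $G=1$, and $c_A$ is unique in $(0,1)$. Strict convexity also shows that $G<1$ on $(c_A,1)$ and $G>1$ on $(0,c_A)$; we use this next.

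\emph{The case $m=1$.} Let $r\notin\{1,c_A\}$ be a root of $\chi_A$. If $r=0$, then $|r|=0<c_A$. Otherwise $r\neq0$ and $G(r)=1$, that is,
\[
1=\Pbb_A(1)\,r+\sum_{k=1}^{\ell}\Pbb_A(-k)\,r^{-k}.
\]
Put $s=|r|$. We show that $s>c_A$ is impossible, and then that $s=c_A$ forces $r=c_A$.

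\textbf{Case $s\ge1$.} Consider instead $H(x):=x^\ell\,(1-G(x))$, a polynomial. For $m=1$ it is $-\chi_A$ up to the positive factor $a_A=\Pbb_A(1)$, and we can write $1-G(x)=(1-x)Q(x)$. Instead of working with $Q$, it is cleaner to use the reciprocal variable $y=1/x$. The equation becomes
\[
\sum_{k=1}^{\ell}\Pbb_A(-k)\,y^k+\Pbb_A(1)\,y^{-1}=1,
\]
that is, $P(y):=\sum_{k\ge1}\Pbb_A(-k)y^k=1-\Pbb_A(1)/y$. This is a polynomial with nonnegative coefficients.

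For $m=1$ the claim is that all other roots are strictly inside the disc of radius $c_A$. We use a Rouché/Pringsheim-style argument. Write $x=1/y$, so the equation reads $y=\Pbb_A(1)+y\,P(y)$ after multiplying by $y$. Set $\phi(y):=\Pbb_A(1)+\sum_{k}\Pbb_A(-k)\,y^{k+1}$, a power series with nonnegative coefficients. Its positive fixed points are $1/c_A>1$ and $1$. We need every complex fixed point $w$ of $\phi$ other than these two to satisfy $|w|>1/c_A$.

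Suppose $w=\phi(w)$ with $|w|\le1/c_A$. Then $|w|=|\phi(w)|\le\phi(|w|)$. Now $\phi(t)-t$ is convex, positive at $t=0$, and vanishes at $t=1$ and $t=1/c_A$; it is therefore negative strictly between these points and positive outside $[1,1/c_A]$. Hence $|w|\in[1,1/c_A]$.

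To push $|w|$ up to $1/c_A$, we use the aperiodicity-free identity $|w|\le\phi(|w|)$ together with the equality case of the triangle inequality. Equality requires all terms $w^{k+1}$ with $\Pbb_A(-k)>0$, together with the constant term, to be aligned with the positive real axis. Since the constant $\Pbb_A(1)>0$ is positive, this forces $\phi(w)>0$, hence $w>0$ real, hence $w\in\{1,1/c_A\}$. So if $w\notin\{1,1/c_A\}$, the inequality is strict: $|w|<\phi(|w|)$. Strictness places $|w|$ outside $[1,1/c_A]$, which contradicts $|w|\le 1/c_A$ unless $|w|<1$.

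\emph{Main obstacle.} The step I expect to be the hardest is excluding fixed points of $\phi$ of modulus less than $1$, that is, roots $r$ with $|r|>1$. I would first try Rouché's theorem on $|y|=1-\varepsilon$. There $\phi(1)=1$ and $\phi'(1)=\sum(k+1)\Pbb_A(-k)<1$, the latter being equivalent to $\mu_A>0$. Comparing the count of zeros of $y-\phi(y)$ inside $|y|<1$ with that of $y$ alone, this should give that $y-\phi(y)$ has no zeros inside the unit disc. The boundary zero at $y=1$ is simple and is handled by a small indentation, using $\phi'(1)<1$.

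Assembling the pieces: every other root $r=1/w$ satisfies $|w|>1/c_A$, that is, $|r|<c_A$. The case $r=0$ was dealt with directly at the start.
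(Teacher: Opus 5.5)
Your first part (existence and uniqueness of $c_A$ via strict convexity of $G$) is correct, and it supplies an argument for the half the paper only cites. The gap is in the case $m=1$, in exactly the region you flag as the main obstacle. Applying the triangle inequality to the unfactored equation $w=\phi(w)$ gives $|w|\le\phi(|w|)$, i.e.\ $\phi(|w|)-|w|\ge0$. By your convexity analysis this means $|w|\notin(1,1/c_A)$, not $|w|\in[1,1/c_A]$ as you wrote. Together with the equality case, this excludes $1\le|w|\le1/c_A$ apart from $w\in\{1,1/c_A\}$. It says nothing about $|w|<1$, because $\phi(t)>t$ on $[0,1)$, so roots with $|r|>1$ are not excluded. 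Your Rouch\'e step does not close this. On $|y|=1-\varepsilon$ the domination $|\phi(y)|<|y|$ already fails at the real point $y=1-\varepsilon$, where $\phi(1-\varepsilon)>1-\varepsilon$. Even formally, comparison with $y$ would give exactly one zero of $y-\phi(y)$ in that disc, not none.

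The missing idea is to divide out the trivial root $1$ before estimating. The paper writes $F_A(z)=(z-1)\bigl(pz^\ell-\sum_{k=1}^{\ell}b_kz^{\ell-k}\bigr)$ with $p=\Pbb_A(1)$ and $b_k=\sum_{j\ge k}\Pbb_A(-j)\ge0$. Thus every nonzero root $z\neq1$ satisfies $p=\sum_kb_kz^{-k}$. Since $h(t)=\sum_kb_kt^{-k}$ is strictly decreasing on all of $(0,\infty)$ and $h(c_A)=p$, the inequality $p\le h(|z|)$ gives $|z|\le c_A$ in one step, and the equality case forces $z=c_A$. In your $y$-variable this is the factorisation $\phi(y)-y=(1-y)\bigl(p-\sum_kb_ky^k\bigr)$: the sign change of $\phi(t)-t$ at $t=1$ that blocks your estimate is carried entirely by the factor $1-y$. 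Alternatively, your Rouch\'e idea works on a circle $|y|=\rho$ with $1<\rho<1/c_A$. There $|\phi(y)|\le\phi(\rho)<\rho=|y|$, so $y-\phi(y)$ has exactly one zero in $|y|<\rho$, namely $y=1$. Letting $\rho\uparrow1/c_A$ and using your equality-case argument on $|y|=1/c_A$ then completes the proof.
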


An elementary proof of the modulus bound is given in
Appendix~\ref{app:secondary-root-bound}.
These other roots are called \textbf{secondary roots}.
For example, the polynomial for the action $B$ above is
\[
\chi_B(x)=x^2-\frac43x+\frac13
=(x-1)\left(x-\frac13\right).
\]
Its primary root is $1/3$, exactly the factor by which the ruin
probability decreases at each successive fortune. In general, the
primary root determines the exponential rate of decay under repeated
use of an action. Smaller primary roots mean faster decay.

This gives a useful guide to choosing actions at large fortunes.
The following result says that a uniquely smallest primary root
eventually determines the optimal choice. We state it for gains
bounded above by $1$, the case used later in the paper.
\begin{theorem}[{\cite[Theorem~8]{BKSV2008}}]\label{thm:RichMan}
Consider an $(\ell,1)$-solvency game in which every action has positive
drift. If some action $A$ satisfies $c_A<c_B$ for every $B\neq A$,
then every optimal strategy chooses $A$ at all sufficiently large fortunes.
\end{theorem}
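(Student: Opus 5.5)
We prove \Cref{thm:RichMan} by comparing the optimal value sequence with the pure strategy $p^A$. We then argue that at large fortunes any other action $B$ is strictly worse in the Bellman equation~\eqref{eq:bellman}.

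\textbf{Step 1: asymptotics of $p_{\mathrm{opt}}$.} Since $m=1$ and $p^A$ satisfies the linear recurrence of $A$, \Cref{prop:PrimaryRoot} gives
\[
p^A(n)=\alpha c_A^n+O(\rho^n)\qquad\text{for some }\rho<c_A,
\]
provided $c_A$ is a simple root; $\alpha>0$ because $p^A(n)>0$ tends to zero. Hence $p_{\mathrm{opt}}(n)\le p^A(n)=O(c_A^n)$. For a lower bound, note that with gains bounded above by $1$, reaching fortune $n$ from fortune $N>n$ requires passing through every intermediate level. Thus $p_{\mathrm{opt}}(N)\ge \prod_{j=n+1}^{N} h(j)$, where $h(j)$ is the optimal probability of ever dropping by one level. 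I expect the cleaner route is to define ratios $r_n:=p_{\mathrm{opt}}(n+1)/p_{\mathrm{opt}}(n)$ and show that $r_n\to c_A$. The plan for this is to show that $\liminf r_n\ge c_A$: if the ratios were smaller than $c_A$ infinitely often, then a strategy would beat the smallest primary root, contradicting that every action decays at rate at least $c_{\min}=c_A$. One argues via the supermartingale $c^{-X_t}$: for any action $B$, $\sum_k\Pbb_B(k)c^{-k}\ge 1$ exactly when $c\ge c_B$, so with $c=c_A$ the process $c_A^{X_t}$ is a submartingale-type bound giving $p_{\mathrm{opt}}(n)\ge \kappa c_A^n$. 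Combined with the upper bound, $p_{\mathrm{opt}}(n)=\Theta(c_A^n)$.

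\textbf{Step 2: sharpen to an exact ratio limit.} I would then show that $p_{\mathrm{opt}}(n)=\beta c_A^n(1+o(1))$. Because gains are in $\intn{-\ell}{1}$, the Bellman operator at fortune $n$ only involves $p_{\mathrm{opt}}(n-\ell),\dots,p_{\mathrm{opt}}(n+1)$. Normalising by $c_A^n$ yields a sequence $u_n=p_{\mathrm{opt}}(n)/c_A^n$ that is bounded above and below. Any limit point of shifted windows $(u_{n-\ell},\dots,u_{n+1})$ must satisfy the limiting Bellman equation
\[
u_j=\min_B\sum_k \Pbb_B(k)c_A^k u_{j+k}.
\]
The $A$-term of this equation is a probability average, since $\sum_k\Pbb_A(k)c_A^k=1$, while for $B\neq A$ we have $\sum_k\Pbb_B(k)c_A^k<1$ strictly, because $c_A<c_B$ and $\chi_B$ is negative on $(c_B,1)$ with the appropriate normalisation. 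Hence the $B$-terms are strict sub-averages. A maximum-principle argument on a bounded two-sided limit sequence should then force it to be constant, and simultaneously force the minimum to be attained only by $A$.

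\textbf{Step 3: conclude.} If $u$ is asymptotically constant equal to $\beta>0$, then at large $n$ the normalised Bellman cost of $A$ is $\beta(1+o(1))$. The cost of $B$ is $\beta\sum_k\Pbb_B(k)c_A^k(1+o(1))$, and its sign relative to $A$ must be checked carefully; I would recompute which direction is strict, using that $\chi_B(c_A)$ has a definite sign because $c_A$ lies outside $[c_B,1]$. This gap between the costs of $A$ and $B$ is uniform, so for large $n$ only $A$ minimises, and every optimal strategy, which must choose minimisers by \eqref{eq:bellman}, plays $A$.

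The main obstacle is Step 2: proving that $u_n$ converges rather than oscillating. An oscillating $u$ is exactly what produces non-pure tails when primary roots tie. I would first attempt the maximum principle on limit sequences, exploiting the strict inequality for $B\neq A$ and the modulus bound on secondary roots from \Cref{prop:PrimaryRoot}.
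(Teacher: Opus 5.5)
The paper does not prove this theorem; it cites it from~\cite{BKSV2008}. Your argument therefore has to stand on its own, and it does not yet: it contains a sign error, and the decisive step is unproved.

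\textbf{The sign error.} The function $\phi_B(c):=\sum_k\Pbb_B(k)c^k$ is strictly convex on $(0,\infty)$, with $\phi_B(c_B)=\phi_B(1)=1$. So for $c\in(0,1)$ we have $\phi_B(c)\ge1$ exactly when $c\le c_B$. In particular $\phi_B(c_A)>1$ for every $B\neq A$. The $B$-terms in your normalised Bellman equation are therefore strict \emph{super}-averages, not sub-averages as Step~2 claims. If they were sub-averages, $B$ would beat $A$ on flat profiles, and you would be proving the negation of the theorem. Your Step~1 submartingale bound actually needs $\phi_B(c_A)\ge1$, so Steps~1 and~2 contradict each other. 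The criterion written in Step~1 (with $c^{-k}$ and $c\ge c_B$) is also misstated. All of this is repairable.

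\textbf{The real gap.} Step~2 claims that a maximum principle forces every bounded limit sequence $\bar u$ to be constant. The maximum principle only gives two things. First, a limit sequence that \emph{attains} its supremum is constant; this follows from $\bar u_j\le\sum_k\Pbb_A(k)c_A^k\bar u_{j+k}$ and $\Pbb_A(1)>0$. Second, the same holds at an attained infimum, where $\phi_B(c_A)>1$ excludes $B$. Applied around the $\limsup$ and $\liminf$ of $u_n$, this yields long flat stretches at both levels. It does not prevent $u_n$ from oscillating between two distinct levels forever, with non-flat transition windows in which some $B\neq A$ is optimal. That is exactly the behaviour the theorem must exclude. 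Subharmonicity with respect to $A$ cannot exclude it either. The tilted walk with steps distributed as $\Pbb_A(k)c_A^k$ has drift $c_A\phi_A'(c_A)<0$, and such walks admit bounded non-constant subharmonic functions. What you need is a quantity that is monotone along the sequence, and the two-sided Bellman equation does not directly provide one.

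\textbf{A way to close it.} The forward recurrence of \Cref{prop:greedy} supplies such a monotone quantity.
\begin{enumerate}
\item Set $D_n:=Q_n-Q_{n-1}$ and $\gamma^B_i:=\sum_{j\ge i}\Pbb_B(-j)/\Pbb_B(1)$. The computation in \Cref{app:21-ultimately-periodic} generalises to $D_n=\min_B\sum_{i=1}^{\ell}\gamma^B_iD_{n-i}$, with $D_n>0$.
\item Put $s_n:=c_A^{-n}D_n$ and $\omega^B_i:=\gamma^B_ic_A^{-i}$. Then $s_n=\min_B\sum_i\omega^B_is_{n-i}$ and $\sum_i\omega^B_i=h_B(c_A)/h_B(c_B)$, where $h_B$ is the strictly decreasing function of \Cref{app:secondary-root-bound} for action $B$. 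This sum equals $1$ for $A$ and exceeds $1$ for $B\neq A$.
\item Hence the window maximum $\max_{1\le i\le\ell}s_{n-i}$ is nonincreasing, and the window minimum is nondecreasing.
\item In any limit of shifted sequences, both window quantities are constant. At a position attaining the maximum, $\omega^A_1>0$ forces the preceding entry, and inductively every earlier entry, to equal the maximum. So some window consists only of maximal entries, the two limits coincide, and $s_n\to\bar s>0$.
\item Consequently $Q_n(B)-Q_n(A)=c_A^n\bar s\bigl(\sum_i\omega^B_i-1+o(1)\bigr)>0$ for all large $n$. By \Cref{prop:greedy}, every optimal strategy then plays $A$ at those fortunes.
\end{enumerate}
This route needs neither the asymptotics of $p^A$ nor the modulus bound on secondary roots.
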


If primary roots tie, this comparison no longer distinguishes the
actions. The secondary roots can then matter, even at arbitrarily
large fortunes.

\begin{theorem}[{\cite[Theorem~9]{BKSV2008}}]\label{thm:noRichMan}
Consider a $(2,1)$-solvency game with two positive-drift actions $A,B$,
where $\Pbb_A(-2)>0$ and $\Pbb_B(-2)>0$.
If the actions have the same primary root but distinct secondary roots,
then every optimal strategy uses both actions at arbitrarily large
fortunes.
\end{theorem}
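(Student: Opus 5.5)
We prove Theorem~\ref{thm:noRichMan} by contradiction. Suppose some optimal strategy $\sigma$ eventually plays only one action, say $A$, for all $n\ge n_0$. On $n\ge n_0+2$ the optimal values then satisfy the linear recurrence of $A$. Since $m=1$ and $\Pbb_A(-2)>0$, the polynomial $\chi_A$ has degree $3$, with roots $1$, $c$ and the secondary root $r_A$. Here $c=c_A=c_B$, and by Proposition~\ref{prop:PrimaryRoot} we have $|r_A|<c$; because the coefficients are real, $r_A$ is real. Using~\eqref{eq:ruin-limit} to kill the coefficient of $1^n$, we obtain for large $n$
\[
p_{\mathrm{opt}}(n)=\alpha c^n+\beta r_A^n .
\]
For the degenerate case $r_A=c$ we would use the form $(\alpha+\beta n)c^n$, but this is excluded by the modulus bound. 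We also need $\alpha>0$: were $\alpha=0$, the values would decay like $|r_A|^n$, faster than $c^n$. I expect this to contradict optimality via a lower bound $p_{\mathrm{opt}}(n)\ge \kappa c^n$. That bound should follow because $c$ is the common primary root of both actions, so $c^n$ is a supersolution, and a martingale/optional-stopping argument with $c^{X_t}$ bounds the ruin probability from below by a multiple of $c^n$ under every strategy.

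Next we evaluate the one-step deviation to $B$. Define
\[
D(n):=\sum_k \Pbb_B(k)\,p_{\mathrm{opt}}(n+k)-p_{\mathrm{opt}}(n),
\]
the Bellman slack of $B$ at $n$. Optimality forces $D(n)\ge 0$ for all large $n$. Substituting the closed form gives
\[
D(n)=\alpha c^n\Bigl(\sum_k\Pbb_B(k)c^k-1\Bigr)+\beta r_A^n\Bigl(\sum_k\Pbb_B(k)r_A^k-1\Bigr).
\]
The first bracket vanishes because $c$ is a root of $\chi_B$. The second bracket equals $r_A^{-2}F_B(r_A)$, which is nonzero: the roots of $\chi_B$ are $1,c,r_B$, and $r_A\ne r_B$ by hypothesis, while $r_A\ne 1,c$ by the modulus bound. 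Hence $D(n)=\gamma\beta r_A^n$ with $\gamma\neq0$.

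Now split into cases. If $\beta=0$, then $D\equiv 0$ eventually, so $B$ is also a minimiser at every large $n$. Then $p_{\mathrm{opt}}(n)=\alpha c^n$ would satisfy the Bellman equation with both actions; this is not yet contradictory and needs the step below. If $\beta\ne0$ and $r_A<0$, then $D(n)$ alternates in sign, contradicting $D\ge0$. If $r_A>0$, the sign of $D$ is constant, so there is no immediate contradiction.

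The remaining cases are the main obstacle, and I would handle them with the symmetric argument. By the same reasoning, $B$ being eventually optimal yields the slack of $A$, which is $\gamma'\beta' r_B^n$. The plan is to exploit the boundary: the exact values $p_{\mathrm{opt}}(1),p_{\mathrm{opt}}(2),\dots$ are pinned by the finite prefix of $\sigma$ together with the condition $p=1$ at fortunes $\le0$. I would then show that the sign of $\beta$ is forced so that $\beta\gamma<0$. The heuristic is that $c^n$ alone violates the boundary $p(0)=p(-1)=1$, so the correction term must have a definite sign relative to $\chi_B(r_A)$. Concretely, I would compare $p_{\mathrm{opt}}$ with the pure-$B$ ruin function via a coupling/difference argument: let $u=p_{\mathrm{opt}}-p^{B}$, and note that $u$ satisfies the $B$-recurrence up to the slack terms. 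A maximum-principle argument should then give the sign of $\beta$ from the signs of $r_A-r_B$ and of $\chi_B(r_A)$. If this direct sign analysis fails, the fallback is to show that $\beta=0$ is impossible, because $\alpha c^n$ cannot match both boundary conditions at $0$ and $-1$ when the prefix is finite. Combined with the alternation argument, this covers $r_A<0$. For $r_A>0$ I would then rely on the ordering of $r_A$ versus $r_B$, applying the argument to whichever action is the eventual tail.
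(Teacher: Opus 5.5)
Your tail computation is sound: if an optimal $\sigma$ plays only $A$ from some fortune on, then $p_{\mathrm{opt}}(n)=\alpha c^n+\beta r_A^n$ there, and the Bellman slack of $B$ is $\gamma\beta r_A^n$ with $\gamma\neq0$. But the proposal stops at the decisive point: you never prove $\beta\neq0$. That is the real content of the theorem, because with $\beta=0$ the tail $\alpha c^n$ satisfies the Bellman equation for both actions, and a pure tail would be optimal.

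Your fallback, that $\alpha c^n$ ``cannot match the boundary conditions at $0$ and $-1$ when the prefix is finite'', is only asserted. It also does not hold for general reasons. In a common-root $(2,1)$ game where one action has $\Pbb(-2)=0$ (secondary root $0$), the proof of \Cref{thm:common-root-periodic} shows that the residual vanishes from fortune $2$ on. Then $p_{\mathrm{opt}}(n)$ is exactly a multiple of $c^n$ for all $n\ge1$, despite the boundary values. So any valid argument must carry the boundary information through an arbitrary prefix of $A$'s and $B$'s, using $r_A,r_B\neq0$ along the way. Your maximum-principle and coupling ideas are left as heuristics and do not visibly do this.

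You also missed that the branch $r_A>0$ is vacuous. The constant coefficient of $\chi_A$ equals both $-c\,r_A$ and $\Pbb_A(-2)/\Pbb_A(1)>0$, so $r_A<0$, and likewise $r_B<0$. The supermartingale lower bound for $\alpha>0$ is unnecessary as well.

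The missing idea is to strip the common factor $(x-1)(x-c)$ at \emph{every} fortune, not only on the tail. Put $T(n):=p_{\mathrm{opt}}(n+1)-(1+c)p_{\mathrm{opt}}(n)+c\,p_{\mathrm{opt}}(n-1)$, using $p_{\mathrm{opt}}=1$ at fortunes $\le0$. Since $\chi_X(x)=(x-r_X)(x-1)(x-c)$, for every action $X$ and fortune $n\ge1$,
\[
\sum_k\Pbb_X(k)\,p_{\mathrm{opt}}(n+k)-p_{\mathrm{opt}}(n)=\Pbb_X(1)\bigl(T(n)-r_X\,T(n-1)\bigr).
\]
For $X=\sigma(n)$ the left side vanishes, so $T(n)=r_{\sigma(n)}T(n-1)$ whichever action is played. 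As $T(0)=p_{\mathrm{opt}}(1)-1<0$ and $r_A,r_B<0$, the value $T(n-1)$ is nonzero with sign $(-1)^n$.

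On your tail, $T(n)=\beta r_A^{n-1}(r_A-1)(r_A-c)$, so $\beta\neq0$ and your alternation argument closes. More directly, the slack of $B$ at a fortune $n$ where $A$ is played equals $\Pbb_B(1)(r_A-r_B)T(n-1)$, which must be $\ge0$. Hence each action is confined to fortunes of one parity, and every optimal strategy strictly alternates.

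This is how the statement follows from \Cref{thm:common-root-periodic} in the paper. The paper's $R_n$, built from the $Q_n$, equals $-T(n)/V(1)$, because $Q_{n-1}=V(n)/V(1)$ for an optimal strategy.
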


\begin{example}\label{ex:common-root}
Consider the following two actions:
\[
\begin{array}{c|ccc|c}
 & \Pbb(-2) & \Pbb(-1) & \Pbb(+1) & \text{Expected gain}\\ \hline
 A & 1/10 & 1/10 & 4/5 & 1/2\\
 B & 1/7  & 0    & 6/7 & 4/7
\end{array}
\]
Their characteristic polynomials factor as
\[
\chi_A(x)=(x-1)\left(x-\frac12\right)\left(x+\frac14\right),
\qquad
\chi_B(x)=(x-1)\left(x-\frac12\right)\left(x+\frac13\right).
\]
Both primary roots are $1/2$, so both actions have the same
exponential rate of decay. Their secondary roots, $-1/4$ and $-1/3$,
differ. By~\Cref{thm:noRichMan}, an optimal strategy cannot eventually
settle on either action, even though $B$ has the larger expected gain.
\end{example}

The negative secondary roots offer a hint about what may happen:
their powers change sign between odd and even fortunes. The theorem
itself does not specify the pattern of optimal actions; an example
with strict alternation was given in~\cite{Brozek:2010}.
To determine such patterns, we will derive a recurrence that compares
the actions fortune by fortune.

\section{A necessarily aperiodic optimal strategy}\label{sec:aperiodic}

Our main example has just two actions and gains in $\intn{-3}1$:
\begin{equation}\label{eq:aperiodic-game}
\begin{array}{c|cccc}
 & \Pbb(-3) & \Pbb(-2) & \Pbb(-1) & \Pbb(+1)\\ \hline
 A & 1/176 & 5/176 & 42/176 & 128/176\\
 B & 1/184 & 1/184 & 54/184 & 128/184
\end{array}
\end{equation}
Both actions have positive expected gain. Their characteristic
polynomials have the particularly useful factorisations
\begin{align}
\chi_A(x)&=(x-1)\left(x-\frac12\right)
                 \left(x^2+\frac18x+\frac1{64}\right),
                 \label{eq:aperiodic-polynomials}\\
\chi_B(x)&=(x-1)\left(x-\frac12\right)
                 \left(x^2+\frac1{16}x+\frac1{64}\right).
                 \notag
\end{align}
The primary roots coincide, and all secondary roots have modulus $1/8$.
The actions differ only in the angles of their two complex secondary
roots. We shall see how this difference determines the optimal action
at each fortune.

\begin{theorem}[Main result]\label{thm:main-aperiodic}
The game in~\eqref{eq:aperiodic-game} has a unique optimal strategy,
and this strategy is not ultimately periodic.
\end{theorem}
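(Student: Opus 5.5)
We will work with the optimal value sequence $p(n):=p_{\mathrm{opt}}(n)$ and show that both the value and the optimal choice at each fortune are governed by a rotation-like dynamical system, whose orbit we can then control.

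\emph{Step 1: a comparison quantity.} With gains in $\intn{-3}1$, the Bellman equation~\eqref{eq:bellman} gives, for $n>0$, a choice between two linear expressions in $p(n-3),\dots,p(n+1)$. Since $\Pbb_A(+1)=\Pbb_B(+1)=128/176$ and $128/184$ differ, it is natural to solve each equation for $p(n+1)$:
\[
p(n+1)=\max\bigl(L_A(n),\,L_B(n)\bigr),
\]
where $L_X(n)$ is the value forced by $X$. Indeed, choosing $X$ optimally means $p(n)=\sum_k\Pbb_X(k)p(n+k)$ and $p(n)\le$ the other expression, which translates into a max. The value sequence therefore obeys a deterministic forward recurrence, and the optimal action at $n$ is whichever of $A,B$ attains the max. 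Uniqueness of the optimal strategy is then equivalent to $L_A(n)\neq L_B(n)$ for all $n$, because the simplified game has distinct actions and Proposition~\ref{prop:MD-optimal} characterises optimal strategies as minimisers.

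\emph{Step 2: normalise away the common roots.} Using the factorisations~\eqref{eq:aperiodic-polynomials}, we apply the common operator $(E-1)(E-\tfrac12)$, with $E$ the shift, to $p$, obtaining $u(n)$. Define the scaled vector $v(n)=8^{n}\bigl(u(n),u(n+1)\bigr)$. Each fixed action acts on $v$ as a linear map that is, up to a change of basis, a rotation, because the secondary roots of both actions have modulus $1/8$. The angles are $\theta_A$ with $\cos\theta_A=-1/2$ (from $x^2+x/8+1/64$, namely $\theta_A=2\pi/3$) and $\theta_B$ with $\cos\theta_B=-1/4$. One must check that the comparison $L_A(n)\gtrless L_B(n)$ becomes a sign condition on a fixed linear functional of $v(n)$. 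Hopefully it is a half-plane test, so the dynamics are piecewise rotations: on one half-plane rotate by $\theta_A$, on the other by $\theta_B$, in a common inner-product norm. Here we must verify that the two quadratic forms preserved by the maps coincide. Both have constant term $1/64$, which suggests that they do, after a suitable coordinate choice. Since $p\to0$ and $p(n)\approx Cc^n$ with $c=1/2$, the vector $v(n)$ stays bounded and nonzero.

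\emph{Step 3: aperiodicity.} Suppose $\sigma$ were ultimately periodic with period $d$. Then $v(n+d)=Mv(n)$ for a fixed product $M$ of rotations, i.e.\ a rotation by $aθ_A+bθ_B$ with $a+b=d$. Since $\theta_A=2\pi/3$ is a rational multiple of $\pi$ while $\theta_B$ is not (by Niven's theorem, as $\cos\theta_B=-1/4$), we have two cases.
\begin{itemize}
\item If $b\ge1$, the angle is an irrational rotation, so the orbit $v(n_0+kd)$ is dense on a circle. It therefore enters the half-plane that forces a different action at position $n_0+kd$, a contradiction.
\item If $b=0$, the tail is pure $A$. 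A pure $A$ orbit rotating by $2\pi/3$ cannot remain in one half-plane, again giving a contradiction. The case of a pure $B$ tail is handled likewise.
\end{itemize}

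\emph{Step 4: uniqueness, the main obstacle.} We must show that $v(n)$ never lands exactly on the switching line. Here I would use arithmetic. The initial values involve $p(1),p(2),p(3)$, which are determined by the limit condition~\eqref{eq:ruin-limit} and are probably irrational. I would try to show that the boundary condition pins down the component along the secondary eigenspaces, and that hitting the line would force an algebraic relation contradicting the irrationality of $\theta_B/\pi$, for example via a $2$-adic or $\mathbb{Z}[\cos\theta_B]$ valuation argument.

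The genuinely hard part is Step 1's initialisation. The recurrence determines $p$ only given $p(1),p(2),p(3)$, which are themselves fixed by optimality. One must show that the optimal orbit is nondegenerate, meaning it has nonzero secondary component. This relies on the boundary values $p(0)=p(-1)=p(-2)=1$, and the argument fails if that component vanishes.
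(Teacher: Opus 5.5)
Your Step~1 is sound, and the comparison of $A$ and $B$ really is a sign test on a single normalised quantity. But Step~2 rests on an assumption that is false, and Step~3 and your boundedness claim fall with it.

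Put $w_n:=8^nu(n)$. On consecutive pairs $(x,y)=(w_{n-1},w_n)$, an $A$-step is $(x,y)\mapsto(y,-x-y)$ and a $B$-step is $(x,y)\mapsto(y,-x-y/2)$. These preserve the non-proportional forms $x^2+xy+y^2$ and $x^2+xy/2+y^2$, and the two matrices do not commute. Since any two rotations commute, no change of coordinates turns both steps into rotations.

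Consequently the map over one period is not a rotation by $a\theta_A+b\theta_B$. For example, two $A$-steps followed by a $B$-step give $(x,y)\mapsto(x,x/2+y)$, i.e.\ $\bigl(\begin{smallmatrix}1&0\\1/2&1\end{smallmatrix}\bigr)$. This is a nontrivial shear and is conjugate to no rotation. Other words, such as $A,B,A,A,B,A,A$ read in order of application, give trace $-5/2$, which is hyperbolic. So neither density nor boundedness of the orbit follows. The asymptotics $p(n)\approx C2^{-n}$ say nothing about the component at scale $8^{-n}$.

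The missing idea is to record only the visits of the pair to the quadrant in which both coordinates have the sign selecting $B$. Between consecutive visits only two excursions are possible, $\mathit{BAB}$ and $\mathit{BAABAAB}$. Their matrices are $-M^2$ and $M$ for the single matrix $M=\bigl(\begin{smallmatrix}0&1\\-1&1/2\end{smallmatrix}\bigr)$. This $M$ preserves $H(x,y)=x^2-xy/2+y^2$ and is conjugate to a rotation by $\arccos(1/4)$, an irrational multiple of $\pi$. An ultimately periodic sign pattern would then make the return pairs at multiples of the period an orbit of $\pm M^d$, with $d>0$, on a fixed ellipse $H=\mathrm{const}$. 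Such an orbit is dense, so it cannot stay in one quadrant. The common ellipse belongs to the return maps, not to the single steps.

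Your Step~4 is not yet an argument, and the obstacle you single out as hardest is illusory. In terms of $V:=1-p$, your recurrence reads $V(n+1)=\min_X\bigl(V(n)-\sum_j\Pbb_X(-j)V(n-j)\bigr)/\Pbb_X(1)$. Each branch is linear and homogeneous, and $V(0)=V(-1)=V(-2)=0$. Hence $V(n)=V(1)Q_{n-1}$, where $Q$ is the rational sequence generated by the same rule from $Q_0=1$ and $Q_i=0$ for $i<0$. The action pattern therefore does not depend on the unknown value $V(1)>0$. No irrational boundary data enter, and a nonzero secondary component is automatic.

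Your $u$ agrees, up to the factor $-V(1)$ and an index shift, with $R_n:=Q_n-\tfrac32Q_{n-1}+\tfrac12Q_{n-2}$. The pairs $(8^{n-1}R_{n-1},8^nR_n)$ start at $(0,1)$. In this normalisation the quadrant selecting $B$ is the negative one, and the first pair in it is $(-1,-1/2)^{\mathsf T}=M^2e$ with $e=(1,0)^{\mathsf T}$. Every later return pair is $\pm M^ke$ and so lies on $H=1$. A zero can arise only if a return pair lies on the line $y=2x$ separating the two excursion types. That line is spanned by $M^{-1}e$, so this would force $M^{k+1}e=\lambda e$, and invariance of $H$ gives $\lambda=\pm1$. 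This is impossible, because no positive power of $M$ fixes or negates a nonzero vector.

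Your $2$-adic instinct fits exactly this last step. All entries of $(2M)^k$ are even except the lower-right one, and a parity computation then shows that $\det\bigl((2M)^k\mp2^kI\bigr)\neq0$. But the parity argument only has something to act on once the return structure and the rational initial data are in place.
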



The route to the recurrence is quite general. Since every upward step
has size $1$, surviving forever requires eventually reaching each higher fortune
in turn. We first use this observation to turn the infinite-horizon
objective into a succession of finite-target problems. A short
recurrence then tells us how to extend an optimal strategy by one
fortune. For our example, removing the two common polynomial factors
reduces that recurrence to order two.

\subsection{Reaching the next fortune}\label{sec:finite-targets}

For now, consider any $(\ell,1)$-solvency game in which every action has
$\Pbb_A(1)>0$. We retain the standing assumptions of~\Cref{sec:defs},
in particular the existence of an action with positive expected gain.
Write
\[
V^\sigma(n):=1-p^\sigma(n),\qquad V(n):=1-p_{\mathrm{opt}}(n)
\]
for the survival probabilities under $\sigma$ and under optimal play.
For a target $L>n>0$, let $h_L^\sigma(n)$ be the probability of reaching
$L$ before ruin. Set $h_L^\sigma(L)=1$ and $h_L^\sigma(i)=0$ for $i\le0$.
Only the actions at fortunes $1,\ldots,L-1$ affect these probabilities.

Two observations explain why finite targets suffice. First,
$V(n)>0$ at every positive fortune: by~\eqref{eq:ruin-limit}, survival
has positive probability at a sufficiently large fortune, and
consecutive $+1$ outcomes of a positive-drift action can reach it from
any smaller fortune. Second, as in the argument about expected gains
in~\Cref{sec:defs}, play cannot remain in a bounded interval forever
without being ruined, except on an event of probability zero.
Consequently, an infinite surviving play reaches every higher fortune
with probability $1$.

\begin{lemma}\label{lem:finite-targets}
A strategy is optimal if and only if, for every target $L$, it
maximises the probability of reaching $L$ before ruin from every
fortune $n<L$.
\end{lemma}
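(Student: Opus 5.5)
The plan is to work directly with survival probabilities and to use the fact that upward steps have size exactly $1$. Fix a memoryless deterministic strategy $\sigma$, a target $L$ and a fortune $0<n<L$. Any play from $n$ that reaches a fortune $\ge L$ must hit $L$ exactly. By the observation before the lemma, almost every surviving play reaches every higher fortune. Splitting at the first hitting time of $L$ and using that $\sigma$ is memoryless (strong Markov property) then gives the factorisation
\[
V^\sigma(n)=h_L^\sigma(n)\,V^\sigma(L).
\]
The same argument applies to a strategy $\pi'$ that follows an arbitrary (possibly history-dependent, randomised) strategy $\pi$ until $L$ is first hit and then switches to $\sigma$. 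For such a $\pi'$ we get $V^{\pi'}(n)=h_L^\pi(n)\,V^\sigma(L)$. Here $h_L^\pi$ is defined for general $\pi$ in the obvious way. I will read ``maximises'' in the lemma as maximising over all strategies; maximising only over memoryless deterministic ones gives the same values, since the proof of (only if) below works for any competitor $\pi$.

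\emph{(Only if.)} Let $\sigma$ be optimal, so $V^\sigma=V$. Suppose that for some $L$ and $n<L$ some strategy $\pi$ satisfies $h_L^\pi(n)>h_L^\sigma(n)$. Build $\pi'$ as above: follow $\pi$ until hitting $L$, then follow $\sigma$. Then
\[
V^{\pi'}(n)=h_L^\pi(n)V(L)>h_L^\sigma(n)V(L)=V(n).
\]
The strict inequality uses $V(L)>0$, which holds by the first observation before the lemma. This contradicts the definition of $V(n)$ as the optimal survival probability, so $\sigma$ maximises $h_L$ from every $n<L$.

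\emph{(If.)} Suppose $\sigma$ maximises $h_L$ from every $n<L$, for every $L$. Let $\tau$ be an optimal strategy, which exists by \Cref{prop:MD-optimal}. Every play under $\tau$ that survives forever reaches $L$ before ruin, up to a null event, so $h_L^\tau(n)\ge V(n)$. Hence $h_L^\sigma(n)\ge V(n)$ for all $L>n$. Now the events ``reach $L$ before ruin'' under $\sigma$ decrease in $L$. Their intersection is exactly the event of never being ruined: a ruined play is ruined at some finite time, and before that time its fortune stayed below some bound, so it fails to reach larger targets. Conversely, a play that is never ruined reaches every $L$, again up to a null event. By continuity of probability, $V^\sigma(n)=\lim_{L\to\infty}h_L^\sigma(n)\ge V(n)$, so $\sigma$ is optimal.

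The step needing the most care is the factorisation $V^\sigma(n)=h_L^\sigma(n)V^\sigma(L)$ and its variant for $\pi'$. The delicate points are that the upward step size $1$ forces an exact hit of $L$, and that the null event of surviving while staying bounded forever can be discarded, as argued in \Cref{sec:defs}. Beyond that, only $V(L)>0$ and the existence of an optimal strategy are needed.
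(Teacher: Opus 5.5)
Your proof is correct and follows the same route as the paper: the factorisation $V(n)=h_L^\sigma(n)V(L)$, the switch-at-$L$ strategy and $V(L)>0$ for the forward direction, and for the converse a comparison with an optimal strategy followed by a limit over the decreasing events ``reach $L$ before ruin''. The only difference is minor: in the converse you use the one-sided bound $h_L^\sigma(n)\ge h_L^\tau(n)\ge V(n)$, whereas the paper asserts $h_L^\sigma(n)=h_L^{\sigma^*}(n)$, which tacitly relies on the forward direction applied to $\sigma^*$, so your version avoids that dependence.
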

\begin{proof}
Let $\sigma$ be optimal. Since upward steps cannot skip $L$, survival
from $n$ first requires reaching $L$ and then surviving from there.
Thus
\[
V(n)=h_L^\sigma(n)V(L).
\]
If some other strategy reached $L$ with larger probability, using it
until reaching $L$ and then switching to $\sigma$ would give a larger
survival probability. This contradicts optimality because $V(L)>0$.
Here the temporary switch is a permissible history-dependent strategy;
optimality in~\Cref{sec:defs} is with respect to all such strategies.

Conversely, suppose $\sigma$ maximises every finite-target probability.
Compare it with an optimal strategy $\sigma^*$. For each $L>n$ we have
$h_L^\sigma(n)=h_L^{\sigma^*}\!(n)$. As $L$ increases, the events of
reaching $L$ before ruin decrease to the event of survival, apart from
the null event of bounded survival. Taking limits therefore gives
$V^\sigma(n)=V^{\sigma^*}\!(n)=V(n)$.
\end{proof}

We now fix actions $A_1,\ldots,A_N$ at fortunes $1,\ldots,N$ and
consider the target $N+1$. For brevity, write $h_i$ for the probability
of reaching this target from fortune $i$. Conditioning on the first
round gives
\begin{equation}\label{eq:hitting-first-step}
h_i=\Pbb_{A_i}(1)h_{i+1}
       +\sum_{j=1}^{\ell}\Pbb_{A_i}(-j)h_{i-j}
       \qquad (1\le i\le N),
\end{equation}
with $h_i=0$ for $i\le0$ and $h_{N+1}=1$.
Because the coefficient of $h_{i+1}$ is positive, this equation lets
us compute $h_2,h_3,\ldots$ successively as multiples of $h_1$.
Those multiples are the quantities we need.

Set $Q_0:=1$ and $Q_i:=0$ for $i<0$, and define
\begin{equation}\label{eq:Q-fixed}
Q_i:=\frac{Q_{i-1}-\sum_{j=1}^{\ell}
                    \Pbb_{A_i}(-j)Q_{i-j-1}}
                {\Pbb_{A_i}(1)}
       \qquad (1\le i\le N).
\end{equation}
Rearranging~\eqref{eq:hitting-first-step} and proceeding upwards from
$i=1$ gives $h_i=Q_{i-1}h_1$. Here $h_1>0$, since $N$ consecutive upward
outcomes have positive probability. The boundary value $h_{N+1}=1$
now yields
\begin{equation}\label{eq:Q-meaning}
Q_N=\frac{1}{h_1}>0,\qquad
  h_i=\frac{Q_{i-1}}{Q_N}.
\end{equation}
In particular, $Q_N$ is the reciprocal of the probability of reaching
$N+1$ from fortune $1$. Each $Q_i$ depends only on the actions up to
fortune $i$, so it need not be recomputed when we increase the target.

Suppose that actions have already been chosen at fortunes below $n$.
For a candidate action $A$ at fortune $n$, write
\begin{equation}\label{eq:Q-candidate}
Q_n(A):=\frac{Q_{n-1}-\sum_{j=1}^{\ell}
                    \Pbb_A(-j)Q_{n-j-1}}
                  {\Pbb_A(1)}.
\end{equation}
For this fixed prefix, maximising the chance of reaching $n+1$ is
exactly the same as minimising $Q_n(A)$. We now show that making these
choices successively gives an optimal strategy, even when several
actions attain the minimum.

\begin{proposition}[Choosing actions successively]\label{prop:greedy}
Start with $Q_0=1$ and $Q_i=0$ for $i<0$. At each fortune $n\ge1$,
choose any action attaining the minimum in
\begin{equation}\label{eq:Q-optimal}
Q_n=\min_{A\in\Acal}Q_n(A).
\end{equation}
Every strategy constructed in this way is optimal. Conversely,
every optimal strategy makes a minimising choice at every step.
\end{proposition}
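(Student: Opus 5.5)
The plan is to reduce everything to \Cref{lem:finite-targets} together with the identity~\eqref{eq:Q-meaning}. By that lemma, a strategy is optimal if and only if, for every target $L$ and every $n<L$, it maximises $h_L(n)$. By~\eqref{eq:Q-meaning} with $N=L-1$, we have $h_L(n)=Q_{n-1}/Q_{L-1}$. Here $Q_0,\ldots,Q_{L-1}$ are computed from the chosen actions at fortunes $1,\ldots,L-1$ via~\eqref{eq:Q-fixed}.

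\textbf{Key monotonicity claim.} Over all prefixes of actions at fortunes $1,\ldots,N$, the quantity $Q_N$ is minimised exactly by the greedy prefixes of~\eqref{eq:Q-optimal}. In particular, $Q_N^{\min}$ equals the greedy value. I would prove this via~\eqref{eq:Q-meaning}, since $1/Q_N=h_1$ is the probability of reaching $N+1$ from fortune $1$. I would not rely on algebraic monotonicity of~\eqref{eq:Q-candidate} in the earlier $Q_i$'s, because the negative coefficients $-\Pbb_A(-j)$ make that unclear.

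The induction on $N$ goes as follows. Fix any prefix. Reaching $N+1$ from fortune $1$ requires first reaching $N$, and then reaching $N+1$ from $N$. The latter probability is $h^{(N+1)}_N=Q_{N-1}/Q_N$. Hence
\[
\frac1{Q_N}=\frac1{Q_{N-1}}\cdot\frac{Q_{N-1}}{Q_N},
\]
which is trivially consistent but not yet useful. The real argument is the following.

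Let $\tau$ be the greedy strategy and $\pi$ any prefix. I would show $h^\tau_{L}(n)\ge h^\pi_{L}(n)$ for all $n<L$, by induction on $L$. For $L=n+1$, decompose on the first step from $n$: with probability $\Pbb_A(1)$ we hit $n+1$; with probability $\Pbb_A(-j)$ we move to $n-j$, and from there we must return to $n$ (probability $h_n(n-j)$, which by induction is maximised by $\tau$) and try again. This gives
\[
h_{n+1}(n)=\frac{\Pbb_A(1)}{1-\sum_j\Pbb_A(-j)h_n(n-j)},
\]
which is increasing in each $h_n(n-j)$. Substituting the optimal values $h_n(n-j)=Q_{n-j-1}/Q_{n-1}$ shows that maximising over $A$ is exactly minimising $Q_n(A)$ in~\eqref{eq:Q-candidate}. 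Then for general $L$, factor $h_L(n)=\prod_{k=n}^{L-1}h_{k+1}(k)$, since upward steps have size $1$. Each factor is individually maximised by the greedy choices and by no non-minimising choice, because strictness holds once the lower values are optimal and positive.

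This gives both directions. A greedy strategy maximises every $h_L(n)$, hence is optimal by \Cref{lem:finite-targets}. Conversely, suppose an optimal $\sigma$ has a first fortune $n$ where its choice is not minimising. Up to fortune $n-1$ it agrees with some greedy run, so its values $h_n(n-j)$ are optimal. Then $h^\sigma_{n+1}(n)$ is strictly below the maximum, which contradicts the lemma for target $L=n+1$.

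\textbf{Main obstacle.} The main difficulty is justifying the renewal formula for $h_{n+1}(n)$ rigorously. In particular, one must check that returns from $n-j$ to $n$ use only the actions below $n$, so the induction hypothesis applies. One must also check that the denominator is positive. This follows from $\Pbb_A(1)>0$ and the fact that $h_n(n-j)\le1$.
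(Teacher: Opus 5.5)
Your argument is correct, but it proves the optimality direction by a different route from the paper.

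The paper proves the converse first, essentially as you do. Replacing an optimal strategy's action at a single fortune $n$ by a better one would lower $Q_n$, and so raise $1/Q_n$, its probability of reaching $n+1$ from $1$. It then needs no induction. The recurrence \eqref{eq:Q-optimal} determines one numerical sequence however ties are broken. An optimal memoryless strategy exists by \Cref{prop:MD-optimal} and must follow the rule. So every greedy strategy has the same $Q$-values, hence by \eqref{eq:Q-meaning} the same finite-target probabilities, and is optimal by \Cref{lem:finite-targets}.

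You instead show directly, by induction on the target, that a greedy prefix dominates every memoryless prefix. Your tools are the monotonicity of $h_{n+1}(n)$ in the return probabilities $h_n(n-j)$ and the factorisation $h_L(n)=\prod_k h_{k+1}(k)$. This explains why greedy choices are optimal, rather than inferring it from the existence of an optimum, at the price of more work.

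Two simplifications are available. First, the renewal formula you flag as the main obstacle needs no new probabilistic argument. Divide \eqref{eq:Q-candidate} by $Q_{n-1}$ and use $h_{n+1}(n)=Q_{n-1}/Q_n(A)$ and $h_n(n-j)=Q_{n-j-1}/Q_{n-1}$ from \eqref{eq:Q-meaning}; the formula then follows algebraically. Second, dominance over memoryless prefixes is enough for the converse of \Cref{lem:finite-targets}. The optimal strategy used in that proof can be taken memoryless by \Cref{prop:MD-optimal}.
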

\begin{proof}
Every optimal strategy must make a minimising choice at fortune $n$.
Otherwise, changing only its action there would decrease $Q_n$ and
increase $1/Q_n$, its probability of reaching $n+1$ from $1$.
This would contradict~\Cref{lem:finite-targets}.

Now fix an optimal strategy $\tau$, which exists by~\Cref{prop:MD-optimal}.
The recurrence~\eqref{eq:Q-optimal} and its initial values determine
a unique numerical sequence $Q_n$, even if different actions attain
the same minimum. Thus every strategy following the rule produces
the same $Q$-values as $\tau$. By~\eqref{eq:Q-meaning}, it has the
same finite-target probabilities as $\tau$, and is therefore optimal
by~\Cref{lem:finite-targets}.
\end{proof}

\subsection{A scalar recurrence for our example}\label{sec:scalar-recurrence}

We apply this construction to~\eqref{eq:aperiodic-game}.
For action $A$, substituting its probabilities into~\eqref{eq:Q-candidate}
gives
\[
Q_{n+1}(A)=\frac{11}{8}Q_n-\frac{21}{64}Q_{n-1}
          -\frac{5}{128}Q_{n-2}-\frac{1}{128}Q_{n-3}.
\]
Moving all terms to the left gives the coefficients of $\chi_A$,
applied to consecutive $Q$-values; the same correspondence holds
for $B$. This explains how the factorisations
in~\eqref{eq:aperiodic-polynomials} can help. Their common factor is
$x^2-\tfrac32x+\tfrac12$, whose coefficients suggest the combination
\begin{equation}\label{eq:R-definition}
R_n:=Q_n-\frac32Q_{n-1}+\frac12Q_{n-2}.
\end{equation}
The initial values are $R_{-1}=0$ and $R_0=1$.
For a candidate action $X$ at fortune $n+1$, define $R_{n+1}(X)$ by
using $Q_{n+1}(X)$ in~\eqref{eq:R-definition}. The other terms are
already fixed, so minimising $Q_{n+1}(X)$ also minimises $R_{n+1}(X)$.

To see what this achieves, substitute the expression for $Q_{n+1}(A)$:
\begin{align*}
R_{n+1}(A)
 &=Q_{n+1}(A)-\frac32Q_n+\frac12Q_{n-1}\\
 &=-\frac18Q_n+\frac{11}{64}Q_{n-1}
       -\frac5{128}Q_{n-2}-\frac1{128}Q_{n-3}\\
 &=-\frac18\left(Q_n-\frac32Q_{n-1}+\frac12Q_{n-2}\right)\\
 &\qquad-\frac1{64}\left(Q_{n-1}-\frac32Q_{n-2}
                                      +\frac12Q_{n-3}\right)\\
 &=-\frac18R_n-\frac1{64}R_{n-1}.
\end{align*}
The same calculation for $B$ gives the coefficient $-1/16$ in place
of $-1/8$. Together, the two candidate recurrences are
\begin{equation}\label{eq:R-candidates}
R_{n+1}(X)=s_XR_n-\frac1{64}R_{n-1},
\qquad s_A:=-\frac18,\quad s_B:=-\frac1{16}.
\end{equation}
The new recurrence uses only two preceding values. Its coefficients
are those of the remaining quadratic factor in each characteristic
polynomial.
In particular,
\[
Q_{n+1}(A)-Q_{n+1}(B)
 =R_{n+1}(A)-R_{n+1}(B)=-\frac1{16}R_n.
\]
Thus $A$ is forced when $R_n>0$, and $B$ is forced when $R_n<0$.
A zero value would mean that both actions are optimal at the next
fortune.

Finally, put $u_n:=8^nR_n$. This positive scaling preserves signs and
removes the coefficient $1/64$ from~\eqref{eq:R-candidates}.
The complete rule is now
\begin{equation}\label{eq:u-recurrence}
u_{-1}=0,\qquad u_0=1,\qquad
u_{n+1}=\begin{cases}
-u_n-u_{n-1},&u_n\ge0,\\[2pt]
-\dfrac12u_n-u_{n-1},&u_n<0.
\end{cases}
\end{equation}
At zero the two formulae coincide. The first few terms illustrate
how their signs select the actions:
\[
\begin{array}{c|rrrrrr}
n&0&1&2&3&4&5\\ \hline
u_n&1&-1&-1/2&5/4&-3/4&-7/8\\
\sigma(n+1)&A&B&B&A&B&B
\end{array}
\]
By~\Cref{prop:greedy}, every optimal strategy chooses $A$ at fortune
$n+1$ when $u_n>0$, and $B$ when $u_n<0$; it may choose either action
only if $u_n=0$. We have therefore reduced the two parts of the main
result to two properties of this single sequence: it never vanishes,
and its signs are not ultimately periodic. We establish both of these
properties over the next two sections.

\subsection{Two return maps}\label{sec:return-maps}

We study consecutive pairs
\[
v_n:=\begin{pmatrix}u_{n-1}\\u_n\end{pmatrix},\qquad
F(x,y):=\begin{cases}
(y,-y-x),&y\ge0,\\
(y,-y/2-x),&y<0.
\end{cases}
\]
Thus $v_{n+1}=F(v_n)$.\footnote{Up to swapping coordinates, this map
belongs to the family studied by Lagarias and Rains~\cite{LagariasRains:2005}.
The proof here uses only the explicit calculations below.}
The formula changes when the second coordinate
changes sign. To simplify the analysis, we look only at visits to
the negative quadrant, where both coordinates are negative.
The first such visit is
\[
v_2=\begin{pmatrix}-1\\-1/2\end{pmatrix}.
\]
Starting from any $(x,y)$ in this quadrant, direct calculation gives
just two possibilities for the next visit:
\begin{equation}\label{eq:return-table}
\begin{array}{c|c|c|c}
\text{Condition}&\text{Actions}&\text{Steps}&\text{Next pair}\\ \hline
0<y/x<2&\mathit{BAB}&3&(x-y/2,\;x/2+3y/4)\\
y/x>2&\mathit{BAABAAB}&7&(y,\;-x+y/2)
\end{array}
\end{equation}
Here a letter records the action prescribed by the sign of the
current second coordinate. For example, the first two steps are
always
\[
(x,y)\longmapsto(y,-x-y/2)
     \longmapsto(-x-y/2,x-y/2).
\]
If $y/x<2$, the new second coordinate $x-y/2$ is negative, and one
more step gives $(x-y/2,x/2+3y/4)$, with both coordinates negative.
This proves the first row of the table. If $y/x>2$, the second
coordinate after two steps is positive; the first return instead
takes seven steps, as verified in~\Cref{app:seven-step-return}.
Neither excursion has a zero coordinate. The boundary $y/x=2$
would produce a zero after two
steps; we shall prove that our sequence never reaches it.

The useful feature of the example is now visible. Put
\begin{equation}\label{eq:return-matrix}
M:=\begin{pmatrix}0&1\\-1&1/2\end{pmatrix}.
\end{equation}
The two return maps in~\eqref{eq:return-table} are
\[
-M^2=\begin{pmatrix}1&-1/2\\1/2&3/4\end{pmatrix}
\quad\text{and}\quad M.
\]
Thus every composition of return maps is a signed positive power
of one matrix. We will use two simple properties of this matrix.

\begin{lemma}\label{lem:return-matrix}
Let $H(x,y):=x^2-xy/2+y^2$.
\begin{enumerate}
\item Both return maps preserve $H$. In particular, their orbits on
the ellipse $H=1$ are bounded.
\item No positive power of $M$ fixes or negates a nonzero vector.
\end{enumerate}
\end{lemma}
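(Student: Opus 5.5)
Both parts reduce to properties of the single matrix $M$ in~\eqref{eq:return-matrix}, since the two return maps are $M$ and $-M^2$.

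For part~1, preservation of $H$ is equivalent to the matrix identity $M^{T}GM=G$, where
\[
G=\begin{pmatrix}1&-1/4\\-1/4&1\end{pmatrix}
\]
is the symmetric matrix of the quadratic form $H$. I would verify this directly: $M(x,y)^T=(y,\,-x+y/2)$, and
\[
H(y,-x+y/2)=y^2-\tfrac12 y(-x+y/2)+(-x+y/2)^2=x^2-\tfrac12xy+y^2 .
\]
Since $H(-v)=H(v)$, the map $-M^2$ preserves $H$ as well. Because the discriminant $1/4-4<0$, the form $H$ is positive definite, so the level set $H=1$ is a compact ellipse. Every orbit of a composition of return maps stays on its level set, and is therefore bounded.

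For part~2, the plan is to compute the eigenvalues. The characteristic polynomial of $M$ is
\[
\lambda^2-\tfrac12\lambda+1,
\]
with roots $\lambda=e^{\pm i\theta}$ where $\cos\theta=1/4$. (Note $\det M=1$.) If $M^k v=\pm v$ for some $k\ge1$ and $v\neq0$, then $\pm1$ is an eigenvalue of $M^k$. The eigenvalues of $M^k$ are $e^{\pm ik\theta}$, so this would give $e^{2ik\theta}=1$, making $e^{i\theta}$ a root of unity. Then $\lambda=e^{i\theta}$ would be an algebraic integer. However, its minimal polynomial $\lambda^2-\tfrac12\lambda+1$ is irreducible over $\mathbb{Q}$ (its roots are nonreal) and is not in $\mathbb{Z}[\lambda]$, a contradiction.

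An alternative, more elementary route is Niven's theorem: $\cos\theta=1/4$ is rational but not in $\{0,\pm1/2,\pm1\}$, so $\theta/\pi$ is irrational. A third option is to note that $2\cos(k\theta)$ satisfies $t_{k+1}=\tfrac12 t_k-t_{k-1}$ with $t_0=2$ and $t_1=1/2$. By induction the denominators of $t_k$ grow as $2^k$, so $t_k\neq\pm2$ for $k\ge1$.

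The only step with real content is the irrationality of $\theta/\pi$. I would present it via the algebraic-integer argument, or via Niven, citing the latter as standard. The rest is routine matrix calculation.
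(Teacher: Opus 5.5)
Your proposal is correct. Part~1 is the paper's argument (direct substitution plus $H(-v)=H(v)$), and your positive-definiteness check spells out the boundedness claim that the paper leaves implicit.

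For Part~2 your main route differs from the paper's primary proof. That proof uses only integer arithmetic. With $N=2M$, induction shows that every entry of $N^k$ is even except the odd lower-right one. Hence $\det(N^k-\varepsilon2^kI)=2^k(2^{k+1}-\varepsilon\,\mathrm{tr}\,N^k)$ has an odd factor and is nonzero. The paper's alternative is conjugacy of $M$ to a rotation by $\arccos(1/4)$ together with Niven's theorem.

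Your algebraic-integer argument is correct; write $\mathbb{Z}[x]$ rather than $\mathbb{Z}[\lambda]$. It is essentially the standard proof of Niven's theorem specialised to $\cos\theta=1/4$. So it makes the paper's alternative self-contained, at the cost of invoking Gauss's lemma (the minimal polynomial of an algebraic integer has integer coefficients). It also gives directly the form in which the lemma is used later: $e^{i\theta}$ is not a root of unity, i.e.\ $\theta/\pi\notin\mathbb{Q}$.

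Your third option is the paper's parity proof in different clothing. Since $\det M=1$, $M^k$ has an eigenvalue $\pm1$ exactly when $\mathrm{tr}\,M^k=t_k=\pm2$. The inductive claim you need is precise, not ``denominators grow as $2^k$'': $a_k:=2^kt_k$ is an odd integer for $k\ge1$, via $a_{k+1}=a_k-4a_{k-1}$. This is the same as the paper's statement that $\mathrm{tr}\,N^k$ is odd.

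The trade-off: the paper's version needs no eigenvalues, complex numbers or number theory. Yours is shorter, and it shows conceptually that the obstruction is the $2$ in the denominator of $\mathrm{tr}\,M=1/2$.
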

\begin{proof}
For Part~(1), substitution gives $H(M(x,y))=H(x,y)$, and $H(-x,-y)=H(x,y)$.

Part~(2) is obtained from an elementary calculation with integer matrices
and parity, given in~\Cref{app:no-fixed-vector}. Alternatively, one can reason
as follows. Observe that $M$ is conjugate to a
rotation matrix with angle $\theta := \arccos(1/4)$. (Conjugacy means
that there is an invertible coordinate-change matrix $P$ which
turns the ellipse $H = 1$ into the unit circle, and such that
$PMP^{-1}$ is a pure counterclockwise rotation by $\theta$.) By Niven's
theorem~\cite[Corollary~3.12, p.~41]{Niven:1956}, $\theta$ is an
irrational multiple of $\pi$, whence the
desired conclusion follows.
\end{proof}

\subsection{Nonvanishing and aperiodicity}\label{sec:nonvanishing}

We can now complete both parts of~\Cref{thm:main-aperiodic}.
The ellipse keeps return pairs away from the origin, while the
arithmetic property of $M$ prevents them from closing up exactly.

\begin{proof}[Proof of~\Cref{thm:main-aperiodic}]
We first show that the sequence never vanishes.
Let $e:=(1,0)^{\mathsf T}$. The first negative pair is $v_2=M^2e$.
As long as the boundary $y=2x$ is avoided, every subsequent return
pair has the form
\begin{equation}\label{eq:return-powers}
w=\varepsilon M^ke,
\qquad \varepsilon\in\{-1,1\},\quad k\ge2,
\end{equation}
because each return multiplies by either $-M^2$ or $M$.
All these pairs satisfy $H(w)=H(e)=1$.

The boundary line $y=2x$ is spanned by $M^{-1}e=(1/2,1)^{\mathsf T}$.
If a return pair in~\eqref{eq:return-powers} lay on this line, then
$M^{k+1}e=\lambda e$ for some nonzero real $\lambda$.
Preservation of $H$ would give $\lambda^2=1$, contrary to~\Cref{lem:return-matrix}.
The boundary is therefore never reached. The return calculation
shows inductively that the next return always exists and that no
coordinate vanishes between returns. Together with $u_0=1$ and
$u_1=-1$, this proves $u_n\ne0$ for all $n\ge0$.
By~\Cref{prop:greedy}, the optimal strategy is unique.

Suppose now that the signs of $u_n$ were ultimately periodic, with
period $K$. Choose a negative pair $v_N$ after periodicity has begun.
Then $v_{N+K},v_{N+2K},\ldots$ are also negative pairs, and the same
finite sequence of return maps takes each one to the next.
Its product $R$ has the form
\[
R=\varepsilon M^d\qquad(d>0),
\]
so $v_{N+jK}=R^jv_N$ for every $j\ge0$.
Every one of these pairs lies in the negative quadrant and on $H=1$.

We can now proceed in two different ways. An elementary argument
involving summing the pairs is presented in Appendix~\ref{app:aperiodicity-sums}.
One can also note,
as per the alternative proof of
Part~(2) of~\Cref{lem:return-matrix}, that $R$, like $M$, is conjugate
to a rotation by some angle that is an irrational multiple of $\pi$,
which entails that the sequence of points $v_{N+K},v_{N+2K},\ldots$
must be dense in the ellipse, and thus cannot forever remain in the negative
quadrant, yielding the required contradiction. The signs, and hence the unique
optimal strategy, are not ultimately periodic.
\end{proof}

We conclude this section by noting that aperiodic optimal strategies can
exhibit arbitrarily long periodic stretches; in the case of our main
example, if we denote by $\sigma$ the unique optimal strategy, one can
verify that $\sigma(n+341) = \sigma(n)$ for all $1 \le n \le 1432$,
whereas this equality fails at at $n = 1433$. Periodicity, or lack thereof, can therefore
not in general be settled purely by inspection.

\section{Periodic optimal strategies}\label{sec:21solvency}

The same recurrence explains why the common-primary-root case with
gains in $\intn{-2}1$ is much simpler. There is now only one secondary
root per action. Removing the two common factors leaves a first-order
recurrence, whose sign either alternates at every step or becomes
zero.

\begin{theorem}\label{thm:common-root-periodic}
 Consider a $(2,1)$-solvency game whose actions have positive expected
gain and a common primary root $c$. There exists an optimal strategy
of period at most $2$.
If every action has positive probability of gain $-2$, then the
optimal strategy is unique: it chooses the action with the smallest
secondary root at odd fortunes and the action with the largest
secondary root at even fortunes.
\end{theorem}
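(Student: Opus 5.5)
We reuse the machinery of~\Cref{sec:finite-targets} and~\Cref{sec:scalar-recurrence}: remove the common quadratic factor from the $Q$-recurrence, and then read off the optimal actions from the sign of a first-order sequence. Since every action has positive drift and $m=1$, every action has $\Pbb_X(1)>0$, so~\Cref{prop:greedy} applies. Write $a_X:=\Pbb_X(-2)$, $b_X:=\Pbb_X(-1)$ and $p_X:=\Pbb_X(1)$.

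\textbf{Step 1: the characteristic polynomials.} The characteristic polynomial is the cubic
\[
\chi_X(x)=x^3-\frac{1}{p_X}x^2+\frac{b_X}{p_X}x+\frac{a_X}{p_X}.
\]
It has roots $1$ and $c$, so it factors as $(x-1)(x-c)(x-r_X)$ with a single secondary root $r_X$. Comparing constant terms gives $c\,r_X=-a_X/p_X$. Hence $r_X\le0$, with $r_X=0$ exactly when $a_X=0$.

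The factorisation also shows that $\chi_X$ is determined by $c$ and $r_X$, and the coefficients of $\chi_X$ determine $p_X,b_X,a_X$. Since the simplified game has distinct distributions, distinct actions therefore have distinct secondary roots.

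\textbf{Step 2: reduce the recurrence to first order.} As in~\Cref{sec:scalar-recurrence}, the candidate recurrence~\eqref{eq:Q-candidate} for $X$ reads
\[
Q_{n+1}(X)=\frac1{p_X}Q_n-\frac{b_X}{p_X}Q_{n-1}-\frac{a_X}{p_X}Q_{n-2},
\]
whose coefficients are exactly those of $\chi_X$. Define
\[
R_n:=Q_n-(1+c)Q_{n-1}+cQ_{n-2},
\]
with $R_0=1$ and $R_{-1}=0$. Dividing $\chi_X$ by the common factor $x^2-(1+c)x+c$ gives the candidate recurrence $R_{n+1}(X)=r_XR_n$. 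I would verify this by the same regrouping of terms used for~\eqref{eq:R-candidates}.

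The terms $Q_{n-1}$ and $Q_n$ are already fixed, so minimising $Q_{n+1}(X)$ is the same as minimising $r_XR_n$. Thus at fortune $n+1$ the optimal choices are:
\begin{itemize}
\item the action with the smallest secondary root $r_{\min}$, if $R_n>0$;
\item the action with the largest secondary root $r_{\max}$, if $R_n<0$;
\item any action, if $R_n=0$.
\end{itemize}

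\textbf{Step 3: follow the signs.} The rule gives $R_{2k}=(r_{\min}r_{\max})^k$ and $R_{2k+1}=(r_{\min}r_{\max})^k r_{\min}$, as long as these are nonzero. Because all secondary roots are nonpositive, $R_n$ alternates in sign, positive at even $n$ and negative at odd $n$, until it possibly becomes $0$, after which it stays $0$ and every action is a minimiser. In all cases the following strategy makes a minimising choice at every step:
\begin{itemize}
\item play $r_{\min}$ at odd fortunes;
\item play $r_{\max}$ at even fortunes.
\end{itemize}
By~\Cref{prop:greedy}, this strategy is optimal, and its period is at most $2$.

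\textbf{Step 4: uniqueness.} If every $a_X>0$, then every $r_X<0$, so $R_n\neq0$ for all $n$. The secondary roots are distinct by Step 1, so the minimiser of $r_XR_n$ is unique at every step. The converse part of~\Cref{prop:greedy} then forces the stated strategy.

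The main obstacle is only bookkeeping. We must confirm the sign of $r_X$ via the constant term, and check that tie cases ($R_n=0$, or equal secondary roots) cannot break the argument. The distinctness claim in Step 1 settles the latter.
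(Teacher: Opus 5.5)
Your proposal is correct and follows essentially the same route as the paper: you factor out the common quadratic via $R_n:=Q_n-(1+c)Q_{n-1}+cQ_{n-2}$, obtain $R_{n+1}(X)=r_XR_n$ with $r_X\le0$ from the constant term, and read the forced choices off the alternating sign of $R_n$, using distinctness of the secondary roots for uniqueness. The only cosmetic difference is that you verify directly, via the closed forms for $R_n$, that the alternating strategy is minimising at every step, which covers the case $R_n=0$ uniformly; the paper instead splits into the cases where all secondary roots are negative and where one root equals zero.
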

\begin{proof}
Without loss of generality we assume that at least two actions
are available, otherwise there trivially is only one possible strategy (periodic
with period $1$).

Write $\chi_A(x)=(x-1)(x-c)(x-r_A)$.
By~\Cref{prop:PrimaryRoot}, the secondary root satisfies $|r_A|<c$.
The constant coefficient of $\chi_A$ is both $-cr_A$ and
$\Pbb_A(-2)/\Pbb_A(1)$, so $r_A\le0$, with equality exactly when
$\Pbb_A(-2)=0$.
Distinct action distributions have distinct secondary roots: the
three roots determine the monic polynomial, and its coefficients
recover the probabilities through~\Cref{def:char-poly}.

Set $R_n:=Q_n-(1+c)Q_{n-1}+cQ_{n-2}$, with $R_0=1$.
As before, the recurrence for $Q_n$ gives
\begin{equation}\label{eq:21-R}
R_{n+1}(A)=r_AR_n,\qquad
Q_{n+1}(A)-Q_{n+1}(B)=(r_A-r_B)R_n.
\end{equation}
Hence a positive $R_n$ calls for the smallest secondary root, and a
negative $R_n$ calls for the largest.

If all secondary roots are negative, $R_n$ changes sign at every
step and never vanishes. Starting from $R_0=1$, the choices are
therefore forced and alternate as stated.

Suppose instead that some secondary root is zero.
Then the first choice uses
the smallest root, which is negative, and the second uses the largest,
which is zero. We then have $R_2=0$, and~\eqref{eq:21-R} gives $R_n=0$
for all $n\ge2$, whatever actions follow. Every subsequent choice is
therefore optimal. In particular, we may continue alternating the
first two actions to obtain a strategy of period $2$, or repeat one
action forever to obtain a pure tail.
\end{proof}

For instance, the actions in~\Cref{ex:common-root} have secondary roots
$-1/4$ and $-1/3$. The unique optimal strategy is consequently
$\mathit{BABABA}\ldots$: action $B$ at odd fortunes and action $A$ at
even fortunes. This gives the precise pattern that the primary-root
comparison alone could not determine.

The common-root hypothesis gives an explicit optimal strategy from
the outset. Without it, the initial choices can be more complicated,
but the successive-choice construction of~\Cref{prop:greedy} still
ensures a simple tail.

\begin{corollary}\label{cor:21-ultimately-periodic}
Every $(2,1)$-solvency game has an ultimately periodic optimal strategy
whose tail has period at most $2$.
\end{corollary}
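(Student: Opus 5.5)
The plan is to apply the successive-choice construction of~\Cref{prop:greedy} directly, with no assumption on primary roots, and to reduce it to a one-dimensional monotone dynamical system. Two remarks come first. An action with $\Pbb_A(1)=0$ can only decrease the fortune, so it leads to ruin with probability $1$. Since $V(n)>0$ at every positive fortune, such an action is never optimal, and we discard it. Every remaining action then has $\Pbb_A(1)>0$, so the setting of~\Cref{sec:finite-targets} applies. Every action also has $\Pbb_A(-1)+\Pbb_A(-2)>0$, since it permits a negative gain.

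\textbf{Step 1: a recurrence for differences.} Put $D_n:=Q_n-Q_{n-1}$, so that $D_0=1$. Write $p_k$ for $\Pbb_A(k)$. We rewrite~\eqref{eq:Q-candidate} with $\ell=2$ using $p_1+p_{-1}+p_{-2}=1$, which gives
\[
p_1\bigl(Q_{n+1}(A)-Q_n\bigr)=(p_{-1}+p_{-2})D_n+p_{-2}D_{n-1}.
\]
By induction, every $D_n>0$: we have $D_1=1/\Pbb_{A_1}(1)-1>0$ (and $D_1=1$ if we take $D_{0}$ from $Q_{-1}=0$ appropriately), and at each step the coefficients are nonnegative and not both zero. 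Set $a_A:=(p_{-1}+p_{-2})/p_1$ and $b_A:=p_{-2}/p_1$, and let $t_n:=D_n/D_{n-1}>0$. Minimising $Q_{n+1}(A)$ is then the same as minimising $a_At_n+b_A$, or equivalently $a_A+b_A/t_n$. The resulting ratio is
\[
t_{n+1}=g(t_n),\qquad g(t):=\min_{A\in\Acal}\Bigl(a_A+\frac{b_A}{t}\Bigr).
\]
Here the minimiser at $t_n$ is exactly the set of optimal actions at fortune $n+1$. This minimiser is determined by the sign pattern of $t\mapsto(a_A-a_B)t+(b_A-b_B)$, so it depends only on the position of $t$ relative to finitely many breakpoints $\beta_1<\dots<\beta_r$. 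On each open interval between breakpoints, the minimising set is constant.

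\textbf{Step 2: monotonicity.} Each branch of $g$ is nonincreasing in $t$, so $g$ is nonincreasing and $g\circ g$ is nondecreasing. Hence the subsequences $(t_{2k})$ and $(t_{2k+1})$ are each monotone. They are also bounded, since $t_n\ge\min_A a_A>0$ for $n\ge1$ and $t_n\le\max_A a_A+\max_A b_A/\min_A a_A$ for $n\ge2$. Therefore they converge to limits $t_{\mathrm{e}}$ and $t_{\mathrm{o}}$. We do not even need to identify these limits.

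\textbf{Step 3: the tail of choices.} This is the delicate step. Consider the even subsequence; the odd one is identical. If it is eventually constant, equal to some $t^*$, then from that point on we fix one minimiser at $t^*$. If it is strictly monotone, it eventually lies in a single open interval adjacent to $t_{\mathrm{e}}$ on one side. This holds even when $t_{\mathrm{e}}$ is itself a breakpoint, because strict monotone convergence approaches from one side only. Thus there is an index beyond which every even $t_n$ lies in one open interval of constancy. On that interval we fix one minimising action for all even steps, and similarly one action for all odd steps.

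Ties are handled by~\Cref{prop:greedy}: the numbers $Q_n$ do not depend on which minimiser is chosen, so the sequence $t_n$ is the same for every greedy strategy. Our consistent tie-breaking therefore still yields an optimal strategy. Its choices are arbitrary greedy ones up to some fortune $n_0$, and afterwards they alternate between at most two fixed actions. This gives an ultimately periodic strategy with tail of period at most $2$.

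The main obstacle I expect is making the boundary cases in Step 3 rigorous: a limit sitting exactly on a breakpoint, a subsequence that becomes stationary, and checking that consistent tie-breaking is legitimate because the sequence $(t_n)$ is independent of it. I would also double-check the sign and positivity bookkeeping in Step 1, in particular the initial values $D_0,D_1$ coming from $Q_{-1}=0$ and $Q_0=1$, since the whole argument rests on $t_n>0$.
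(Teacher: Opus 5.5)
Your proposal is correct and follows essentially the same route as the paper's proof: differences $D_n$, the ratio map $t\mapsto\min_A(a_A+b_A/t)$, monotone odd and even subsequences, finitely many thresholds, and consistent tie-breaking justified by \Cref{prop:greedy}. The only variation is that the paper fixes the directions of monotonicity via $g(t)\ge t_1$, whereas you use monotone orbits of $g\circ g$ and one-sided convergence. A few small fixes are needed:
\begin{itemize}
\item It is $D_0=1$, while $D_1=\min_A a_A$, not $1$.
\item An action with $\Pbb_A(1)=0$ should be excluded because a memoryless strategy playing it at fortune $k$ can never exceed $k$, and so is ruined almost surely. One use of it does not by itself lead to ruin.
\item You should also dispose of the trivial cases, where some action never loses or all drifts are nonpositive, and map the simplified actions back to the original game.
\end{itemize}
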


The proof examines the ratios of consecutive increases in $Q_n$.
These ratios evolve by a nonincreasing function, so their odd and
even subsequences are each monotone. The preferred action can change
only at finitely many threshold values of the ratio. With ties broken
consistently, the choices at odd and even fortunes therefore eventually
stabilise. The calculation is given in
Appendix~\ref{app:21-ultimately-periodic}.

\section{Computing optimal strategies}\label{sec:computability}

We now turn from the pattern of optimal actions to the problem of
computing them. Throughout this section, all gain probabilities are
rational and are given explicitly. There are two useful routes:
the successive-choice rule gives an exact algorithm when upward gains
are bounded by $1$, while uniqueness permits an approximation-based
argument for arbitrary bounded gains.

\subsection{An exact algorithm when upward gains are bounded by one}
\label{subsec:exact_algorithm}

\begin{theorem}\label{thm:compute-upward-one}
Every rational $(\ell,1)$-solvency game has a computable optimal
strategy. Its first $n$ actions can be found using
$O(n(\ell+1)|\Acal|)$ rational arithmetic operations, and in time
polynomial in $n$ and the size of the game description.
\end{theorem}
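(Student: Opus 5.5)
The plan is to turn the successive-choice rule of \Cref{prop:greedy} into an algorithm. I will first reduce to the case covered by \Cref{sec:finite-targets}, and then bound the sizes of the rationals that appear.

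\emph{Step 1: reduction to $\Pbb_A(1)>0$.} In an $(\ell,1)$-game simplified as in \Cref{sec:defs}, an action with $\Pbb_A(1)=0$ has only negative gains. I will discard all such actions and then argue that the discarding does not change the optimal values.
\begin{itemize}
\item In the reduced game, an action of positive drift survives, so \Cref{lem:finite-targets} gives $V(i)=h_n(i)V(n)\le V(n)$ for $i<n$. The reason is that upward steps have size $1$, so from $i<n$ one must pass through $n$.
\item A downward-only action $D$ at fortune $n$ therefore yields $\sum_j\Pbb_D(-j)V(n-j)\le V(n)$.
\item Hence the reduced game's value function satisfies the Bellman equation~\eqref{eq:bellman} of the original game.
\item The value function is the least solution of that equation with boundary $1$ at $n\le0$. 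This is the argument of \Cref{prop:MD-optimal}: the finite-horizon ruin probabilities $q_t$ are bounded by any such solution. So the two value functions coincide.
\item Consequently any strategy optimal in the reduced game is optimal in the original one.
\end{itemize}

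\emph{Step 2: the algorithm.} Compute $Q_{-\ell+1},\dots,Q_0$ from the initial values. Then, for $n=1,2,\dots$:
\begin{itemize}
\item evaluate $Q_n(A)$ via~\eqref{eq:Q-candidate} for every $A\in\Acal$;
\item pick a minimiser, with ties broken by a fixed order on $\Acal$;
\item set $Q_n$ to the minimum.
\end{itemize}
By \Cref{prop:greedy} the resulting strategy is optimal, and it is clearly computable. Each candidate costs $O(\ell+1)$ arithmetic operations, and there are $|\Acal|$ candidates per fortune. This gives $O(n(\ell+1)|\Acal|)$ operations for the first $n$ actions.

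\emph{Step 3: bit complexity, which I expect to be the main obstacle.} Exact comparisons are needed, so I must show that the $Q_i$ have polynomial bit size.
\begin{itemize}
\item \emph{Denominators.} Let $D$ be a common denominator of all probabilities, and write $\Pbb_A(1)=a_A/D$. Put $K:=D\prod_{A}a_A$. By induction on~\eqref{eq:Q-candidate}, $K^iQ_i$ is an integer. Each step multiplies by $D/a_A$ and subtracts terms with denominator dividing $D$, and $K^{i}$ absorbs both. The same holds for the candidate values $Q_n(A)$.
\item \emph{Numerators.} Let $C:=(1+\ell)/\min_A\Pbb_A(1)$. By induction on the recurrence, $|Q_i|\le C^i$.
\item \emph{Conclusion.} Each numerator therefore has $O(n(\log K+\log C))$ bits. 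This is polynomial in $n$ and the size of the game description. Every arithmetic operation and comparison is then polynomial time, which gives the claimed total running time.
\end{itemize}
The remaining care lies in the reduction of Step 1, and in checking that the common scaling $K^i$ genuinely clears all denominators when several different actions are mixed along the prefix.
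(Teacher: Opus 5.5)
\textbf{There is a genuine gap in Step 1.} Write $p_{\mathrm{red}}$ for the optimal ruin probabilities of the reduced game. You correctly show two things: $p_{\mathrm{red}}$ solves the Bellman equation \eqref{eq:bellman} of the original game, and $p_{\mathrm{opt}}$ is the least nonnegative solution of that equation. Together these give only $p_{\mathrm{opt}}\le p_{\mathrm{red}}$, which was already obvious because the reduced game has fewer strategies. Being a solution does not identify the value; $p\equiv1$ is also a solution. What you need is the reverse inequality $p_{\mathrm{red}}\le p_{\mathrm{opt}}$, i.e.\ that optimal play in the original game never needs an action without upward gains. Your argument does not address this. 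Your monotonicity step is carried out for $V_{\mathrm{red}}$, via \Cref{lem:finite-targets} in the reduced game, not for the original $V$, so it cannot supply this direction.

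The paper closes this gap directly. Suppose a memoryless strategy plays such an action $A$ at fortune $k$. Starting from $k$, the fortune can never exceed $k$: upward steps have size $1$, and every return to $k$ plays $A$ again. So play from $k$ is bounded and ruined almost surely, whereas $V(k)>0$ in the original game. Hence the optimal memoryless strategy from \Cref{prop:MD-optimal} uses only retained actions, which gives $p_{\mathrm{red}}\le p_{\mathrm{opt}}$. Alternatively, you could keep your Bellman-equation route by proving that the original $V$ is strictly increasing. Then retained actions attain the original Bellman optimum at $p_{\mathrm{opt}}$, and you can apply the least-solution principle to the \emph{reduced} equation.

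Separately, Step 1 presupposes the standing assumptions of \Cref{sec:defs}. For a statement about every rational $(\ell,1)$-game, you should also dispose of the trivial cases explicitly, as the paper does: some action never loses, or all drifts are nonpositive.

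Steps 2 and 3 are sound. Step 3 differs from the paper only in bookkeeping. The paper clears denominators with the prefix product $P_j=\prod_{i\le j}D\Pbb_{\sigma(i)}(1)$ and bounds numerators via the probabilistic estimate $0<Q_j\le D^j/P_j$ from \eqref{eq:Q-meaning}. You use the uniform factor $K=D\prod_A a_A$ and the algebraic growth bound $|Q_i|\le C^i$. Both give bit sizes polynomial in $n$ and the description size; the paper's bound is a little tighter.
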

\begin{proof}
The trivial cases from~\Cref{sec:defs} can be recognised directly:
an action that never loses gives a constant optimal strategy, and if
all expected gains are nonpositive then every strategy is optimal.
Otherwise remove zero-gain outcomes as described there.

An action $A$ with $\Pbb_A(1)=0$ can also be discarded. Indeed, if a
memoryless strategy chose it at fortune $k$, then starting from $k$
the fortune could never exceed $k$: upward steps have size at most
$1$, and the strategy would again choose $A$ each time it returned to
$k$. Bounded play is ruined almost surely, whereas optimal survival
from $k$ has positive probability. Such an action cannot occur in an
optimal strategy. Actions with nonpositive expected gain but
$\Pbb_A(1)>0$ remain eligible.

We may now apply~\Cref{prop:greedy}, breaking ties by a fixed ordering
of the actions. Computing each candidate in~\eqref{eq:Q-candidate}
takes $O(\ell+1)$ rational operations, so computing all choices up to
$n$ takes $O(n(\ell+1)|\Acal|)$ operations. Only listed nonzero
probabilities need to be included in each sum.

For completeness, the numbers used in this computation also have
controlled size. Let $D$ be the least common multiple of the denominators of the
transition probabilities after the simplifications above. Its bit
length is polynomial in the size of the game description. Along the chosen
prefix, put
\[
a_i:=D\Pbb_{\sigma(i)}(1),\qquad P_j:=\prod_{i=1}^{j}a_i.
\]
Induction in~\eqref{eq:Q-fixed} shows that the denominator of $Q_j$
divides $P_j$. Moreover, $j$ consecutive upward outcomes reach $j+1$
from $1$, so~\eqref{eq:Q-meaning} gives
\[
0<Q_j\le\frac{D^j}{P_j}.
\]
Thus $Q_j$ can be represented as $N_j/P_j$ with
$0<N_j\le D^j$ and $P_j\le D^j$. The integers involved have
$O(j\log D)$ bits; the same bound holds when evaluating any candidate
extension. Exact arithmetic therefore gives the stated polynomial
time bound. The linear dependence on $n$ counts arithmetic operations,
not bit operations, and the bound is polynomial in $n$, rather than
in the length of its binary representation.
\end{proof}

We remark that for any fixed solvency game which is known to admit an
eventually periodic optimal strategy $\sigma$, computing $\sigma(n)$
can always be carried out in time polynomial in the bitsize of
$n$. It is unclear whether this can be done even for $(3,1)$-solvency
games in general, and we conjecture that this is not the case, owing
to the existence of aperiodic optimal strategies.

It is nevertheless worth noting that, in the specific case of our main
$(3,1)$ example, the situation is particularly favourable and the
optimal move $\sigma(n)$ at fortune $n$ \textbf{is} computable in time
polynomial in the bitsize of $n$, notwithstanding aperiodicity, via an
application of Baker's theorem on linear forms in logarithms of
algebraic numbers. The two crucial points are that (i)~both return
maps in~\eqref{eq:return-table} preserve the \emph{same} ellipse
$H=1$, and moreover (ii)~the two return maps exchange two portions of
the negative-quadrant arc of the ellipse, enabling the rapid
calculation of where the orbit stands at distant steps via a
telescoping calculation. The interested reader will find the details
in Appendix~\ref{app:aperiodic-polytime}.

\subsection{Uniqueness suffices for arbitrary bounded gains}

For gains larger than $1$, reaching a higher fortune need not involve
visiting every intermediate fortune. The successive-choice construction
therefore no longer applies. Nevertheless, effective approximation
of ruin probabilities is enough when the optimum is unique.

\begin{theorem}\label{thm:unique-computable}
If a rational solvency game has a unique optimal memoryless
deterministic strategy, then that strategy is computable.
\end{theorem}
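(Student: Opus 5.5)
The plan is to compute $\sigma(n)$ by approximating the optimal ruin probabilities closely enough that the Bellman equation~\eqref{eq:bellman} identifies the unique minimiser at $n$. We may assume the nontrivial setting of~\Cref{sec:defs}. The degenerate cases are decidable from the rational data and can be handled directly:
\begin{itemize}
\item if some action never loses, it is then unique;
\item if all drifts are nonpositive, uniqueness forces $|\Acal|=1$ after simplification.
\end{itemize}

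\textbf{Uniqueness gives strict Bellman minimisers.} Suppose that at some fortune $n$ two distinct actions $A\neq B$ both attained the minimum in~\eqref{eq:bellman}. Choosing minimisers arbitrarily at all other fortunes, and $A$ respectively $B$ at $n$, yields two distinct memoryless deterministic strategies. By~\Cref{prop:MD-optimal}, both are optimal, which contradicts uniqueness. Conversely, any optimal memoryless deterministic strategy must choose a minimiser everywhere. Otherwise, arguing as in the first step of~\Cref{prop:greedy}, deviating at one fortune would give a one-step improvement; since $p_{\mathrm{opt}}$ is the value, this is a contradiction. Hence, for every $n>0$, the quantities
\[
E_A(n):=\sum_{k}\Pbb_A(k)\,p_{\mathrm{opt}}(n+k)
\]
have a unique minimiser $\sigma(n)$. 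There is therefore a gap $\delta_n>0$ between the smallest and the second smallest value of $E_A(n)$.

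\textbf{The algorithm.} At fortune $n$, we compute, for $j=1,2,\dots$, rational intervals of width at most $2^{-j}$ containing $p_{\mathrm{opt}}(n+k)$ for each $k\in\{-\ell,\dots,m\}$ with $n+k>0$. From these we obtain intervals for each $E_A(n)$ by interval arithmetic. We stop as soon as some action's interval lies strictly below all the others, and output that action. Termination follows from $\delta_n>0$: once $2^{-j}<\delta_n/2$, the intervals separate. Correctness holds because a strictly separated interval certifies the true minimiser. The procedure needs no a priori knowledge of $\delta_n$, which is why no complexity bound results.

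\textbf{The main obstacle} is the approximation step: showing that $p_{\mathrm{opt}}(n)$ is computable to arbitrary precision, with certified two-sided bounds. We would invoke~\cite{BBEK:IC2013}, as announced in the introduction. If a self-contained argument is wanted, the two bounds would be obtained separately.

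For \emph{upper bounds}, we use $\varepsilon$-optimal pure-tail strategies. For any finite prefix of actions on $\{1,\dots,n_0\}$ followed by a fixed positive-drift action $A^*$, the ruin probability is computable by exact linear algebra. Beyond $n_0$, we use $p^{A^*}$ expressed through its primary root, or alternatively a bound such as $p^{A^*}(n)\le c^n$ for a rational $c$ slightly above $c_{A^*}$ certified by sign checks on $\chi_{A^*}$. Enumerating prefixes and increasing $n_0$ drives these upper bounds to $p_{\mathrm{opt}}$.

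For \emph{lower bounds}, we use value iteration on a truncated state space. Ruin within $t$ rounds under the optimal $t$-step policy, computed exactly by backward induction over the finitely many reachable fortunes, gives increasing lower bounds $q_t(n)\le p_{\mathrm{opt}}(n)$.

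The delicate point is that these lower bounds converge to $p_{\mathrm{opt}}(n)$. Finite-horizon optimal ruin converges to the infinite-horizon value for this reachability objective, since with finitely many actions the $\min$ and $\lim$ can be exchanged by a compactness (König-type) argument on the finite action set at each fortune. Combining the converging upper and lower bounds supplies the required intervals, and the theorem follows.
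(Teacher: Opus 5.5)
Your proof is correct and is essentially the paper's argument. Uniqueness forces a strict Bellman minimiser at every fortune, since two tied minimisers would give two distinct optimal strategies by \Cref{prop:MD-optimal}, and certified approximations of $p_{\mathrm{opt}}$ from \cite[Theorem~3.1]{BBEK:IC2013} eventually separate the finitely many one-step values.

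Your self-contained approximation sketch is an optional extra that the paper does not attempt. If you keep it, take the tail to be a permanent switch to $A^*$ once the fortune first exceeds $n_0$ (a history-dependent strategy). The bound $p^{A^*}(n)\le c^n$ then directly controls the tail contribution, and you need not account for returns into the prefix region.
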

\begin{proof}
For a fortune $n>0$ and action $A$, let
\[
b_A(n):=\sum_{k=-\ell}^{m}\Pbb_A(k)p_{\mathrm{opt}}(n+k),
\]
where $p_{\mathrm{opt}}(i)=1$ for $i\le0$.
By~\eqref{eq:bellman}, an optimal action minimises $b_A(n)$.
There is exactly one minimiser: if two actions tied, choosing either
one at $n$ and Bellman-minimising actions elsewhere would give two
distinct optimal strategies by~\Cref{prop:MD-optimal}.

The values $p_{\mathrm{opt}}(i)$ can be effectively approximated to
any prescribed rational error~\cite[Theorem~3.1]{BBEK:IC2013}.
Since $b_A(n)$ is a finite weighted average of such values, it too
admits effective approximations with certified error bounds.
Approximate all the $b_A(n)$ increasingly closely until the upper
bound for one action lies below the lower bounds for all others.
The strictly positive gap between the unique minimum and the finitely
many other values guarantees that this eventually happens. Output
that action.
\end{proof}

This argument supplies no \emph{a priori} quantitative bound on how much precision is needed to
separate the actions. If the optimal strategy is not unique, ties
also prevent it from being a general selection procedure: repeated
approximation cannot by itself certify that two values are exactly
equal. Whether every rational solvency game with upward gains larger
than $1$ has a computable optimal strategy remains open, as does
deciding whether a given solvency game has a unique optimal strategy.

\clearpage
\bibliography{solvency}

@inproceedings{ajdarow_et_al:LIPIcs.ICALP.2025.138,
  author = {Michal Ajdar{\'o}w and James C. A. Main and Petr Novotn{\'y}
            and Mickael Randour},
  title = {Taming infinity one chunk at a time: Concisely represented strategies
           in one-counter {MDPs}},
  editor = {Keren Censor-Hillel and Fabrizio Grandoni and Jo{\"e}l Ouaknine
            and Gabriele Puppis},
  booktitle = {52nd International Colloquium on Automata, Languages,
               and Programming (ICALP 2025)},
  series = {Leibniz International Proceedings in Informatics (LIPIcs)},
  volume = {334},
  pages = {138:1--138:19},
  publisher = {Schloss Dagstuhl -- Leibniz-Zentrum f{\"u}r Informatik},
  address = {Dagstuhl, Germany},
  year = {2025},
  doi = {10.4230/LIPIcs.ICALP.2025.138}
}

@article{BakerWustholz:1993,
  author = {Alan Baker and Gisbert W\"ustholz},
  title = {Logarithmic forms and group varieties},
  journal = {Journal f\"ur die reine und angewandte Mathematik},
  volume = {442},
  pages = {19--62},
  year = {1993},
  doi = {10.1515/crll.1993.442.19}
}

@inproceedings{BKSV2008,
  author = {Noam Berger and Nevin Kapur and Leonard J. Schulman and Vijay V. Vazirani},
  title = {Solvency games},
  editor = {Ramesh Hariharan and Madhavan Mukund and V. Vinay},
  booktitle = {IARCS Annual Conference on Foundations of Software Technology
               and Theoretical Computer Science (FSTTCS 2008)},
  series = {Leibniz International Proceedings in Informatics (LIPIcs)},
  volume = {2},
  pages = {61--72},
  publisher = {Schloss Dagstuhl -- Leibniz-Zentrum f{\"u}r Informatik},
  address = {Dagstuhl, Germany},
  year = {2008},
  doi = {10.4230/LIPIcs.FSTTCS.2008.1741}
}

@article{BBEK:IC2013,
  author = {Tom{\'a}{\v s} Br{\'a}zdil and V{\'a}clav Bro{\v z}ek
            and Kousha Etessami and Anton{\'i}n Ku{\v c}era},
  title = {Approximating the termination value of one-counter {MDPs}
           and stochastic games},
  journal = {Information and Computation},
  volume = {222},
  pages = {121--138},
  year = {2013},
  doi = {10.1016/j.ic.2012.01.008}
}

@inproceedings{brazdil_et_al:LIPIcs.FSTTCS.2013.487,
  author = {Tom{\'a}{\v s} Br{\'a}zdil and Taolue Chen and Vojt{\v e}ch Forejt
            and Petr Novotn{\'y} and Aistis Simaitis},
  title = {Solvency {Markov} decision processes with interest},
  editor = {Anil Seth and Nisheeth K. Vishnoi},
  booktitle = {IARCS Annual Conference on Foundations of Software Technology
               and Theoretical Computer Science (FSTTCS 2013)},
  series = {Leibniz International Proceedings in Informatics (LIPIcs)},
  volume = {24},
  pages = {487--499},
  publisher = {Schloss Dagstuhl -- Leibniz-Zentrum f{\"u}r Informatik},
  address = {Dagstuhl, Germany},
  year = {2013},
  doi = {10.4230/LIPIcs.FSTTCS.2013.487}
}

@unpublished{Brozek:2010,
  author = {V{\'a}clav Bro{\v z}ek},
  title = {A solvency game with alternating actions},
  year = {2010},
  note = {Unpublished manuscript}
}

@article{Browne:1995,
  author = {Sid Browne},
  title = {Optimal investment policies for a firm with a random risk process:
           Exponential utility and minimizing the probability of ruin},
  journal = {Mathematics of Operations Research},
  volume = {20},
  number = {4},
  pages = {937--958},
  year = {1995},
  doi = {10.1287/moor.20.4.937}
}

@book{DubbinsSavage:2014,
  author = {Lester E. Dubins and Leonard J. Savage},
  title = {How to Gamble If You Must: Inequalities for Stochastic Processes},
  series = {Dover Books on Mathematics},
  publisher = {Dover Publications},
  year = {2014},
  isbn = {9780486780641},
  url = {https://store.doverpublications.com/products/9780486780641},
  note = {Edited and updated by William D. Sudderth and David Gilat}
}

@article{Ferguson:1965,
  author = {Thomas S. Ferguson},
  title = {Betting systems which minimize the probability of ruin},
  journal = {Journal of the Society for Industrial and Applied Mathematics},
  volume = {13},
  number = {3},
  pages = {795--818},
  year = {1965},
  doi = {10.1137/0113051}
}

@article{Hill:99,
  author = {Theodore P. Hill},
  title = {Goal problems in gambling theory},
  journal = {Revista de Matem{\'a}tica: Teor{\'i}a y Aplicaciones},
  volume = {6},
  number = {2},
  pages = {125--132},
  year = {1999},
  doi = {10.15517/rmta.v6i2.173}
}

@inproceedings{Kucera:RP2012,
  author = {Anton{\'i}n Ku{\v c}era},
  title = {Playing games with counter automata},
  editor = {Alain Finkel and J{\'e}r{\^o}me Leroux and Igor Potapov},
  booktitle = {Reachability Problems: 6th International Workshop,
               RP 2012, Bordeaux, France, September 17--19, 2012, Proceedings},
  series = {Lecture Notes in Computer Science},
  volume = {7550},
  pages = {29--41},
  publisher = {Springer},
  address = {Berlin, Heidelberg},
  year = {2012},
  doi = {10.1007/978-3-642-33512-9_4}
}

@article{LagariasRains:2005,
  author = {Jeffrey C. Lagarias and Eric M. Rains},
  title = {Dynamics of a family of piecewise-linear area-preserving plane maps
           {II}. {Invariant} circles},
  journal = {Journal of Difference Equations and Applications},
  volume = {11},
  number = {13},
  pages = {1137--1163},
  year = {2005},
  doi = {10.1080/10236190500273127}
}

@book{Niven:1956,
  author = {Ivan Niven},
  title = {Irrational Numbers},
  series = {The Carus Mathematical Monographs},
  number = {11},
  publisher = {Mathematical Association of America},
  year = {1956},
  doi = {10.5948/9781614440116}
}

@article{theoretics:11035,
  author = {Jakob Piribauer and Christel Baier},
  title = {Positivity-hardness results on {Markov} decision processes},
  journal = {TheoretiCS},
  volume = {3},
  pages = {1--47},
  year = {2024},
  doi = {10.46298/theoretics.24.9},
  note = {Article 9}
}

@book{Puterman:book,
  author = {Martin L. Puterman},
  title = {Markov Decision Processes: Discrete Stochastic Dynamic Programming},
  publisher = {John Wiley \& Sons},
  year = {1994},
  isbn = {9780471619772},
  doi = {10.1002/9780470316887}
}

@article{Truelove:1970,
  author = {Alan J. Truelove},
  title = {Betting systems in favorable games},
  journal = {The Annals of Mathematical Statistics},
  volume = {41},
  number = {2},
  pages = {551--566},
  year = {1970},
  doi = {10.1214/aoms/1177697095}
}

\clearpage
\appendix
\section{The modulus of secondary roots}\label{app:secondary-root-bound}

\begin{proof}[Proof of the second assertion of~\Cref{prop:PrimaryRoot}]
Assume $m=1$ and put
\[
p:=\Pbb_A(1),\qquad q_j:=\Pbb_A(-j),\qquad
b_k:=\sum_{j=k}^{\ell}q_j\quad(1\le k\le\ell).
\]
Positive drift gives $p>0$, and the possibility of a loss gives
$b_1>0$. Using $p+\sum_{j=1}^{\ell}q_j=1$, we can divide out the
known factor $z-1$:
\[
F_A(z)=(z-1)\left(pz^\ell-
                 \sum_{k=1}^{\ell}b_kz^{\ell-k}\right).
\]
Thus every nonzero root $z\ne1$ satisfies
\[
p=\sum_{k=1}^{\ell}b_kz^{-k}.
\]
In particular, this holds for the primary root $c_A$.

For $t>0$, define $h(t):=\sum_{k=1}^{\ell}b_kt^{-k}$.
This function is strictly decreasing, since all $b_k\ge0$ and
$b_1>0$. Taking absolute values in the preceding equality gives
\[
p=\left|\sum_{k=1}^{\ell}b_kz^{-k}\right|
 \le\sum_{k=1}^{\ell}b_k|z|^{-k}=h(|z|).
\]
Since $h(c_A)=p$, it follows that $|z|\le c_A$.

If $|z|=c_A$, equality holds in the triangle inequality, so all
nonzero summands $b_kz^{-k}$ have the same argument. Their sum is
the positive real number $p$, so each is positive real.
In particular, $b_1z^{-1}$ is positive real, forcing $z$ to be
positive and hence equal to $c_A$.
Every nonzero root other than $1$ and $c_A$ therefore has modulus
strictly less than $c_A$. Any zero roots satisfy this bound as well.
\end{proof}

\clearpage
\section{Supporting calculations for the aperiodic example}
\label{app:aperiodic-calculations}

\subsection{The seven-step return}\label{app:seven-step-return}

We verify the second row of~\eqref{eq:return-table}.
Suppose $x,y<0$ and $y/x>2$, so $y<2x<0$. Applying the map $F$
successively gives the following pairs and their signs:
\[
\begin{array}{c|c|c}
j&F^j(x,y)&\text{Signs}\\ \hline
0&(x,y)&(-,-)\\
1&(y,-x-y/2)&(-,+)\\
2&(-x-y/2,x-y/2)&(+,+)\\
3&(x-y/2,y)&(+,-)\\
4&(y,-x)&(-,+)\\
5&(-x,x-y)&(+,+)\\
6&(x-y,y)&(+,-)\\
7&(y,-x+y/2)&(-,-)
\end{array}
\]
The signs follow from $y<2x<0$. In particular, $-x-y/2>0$,
$x-y/2>0$, $x-y>0$, and $-x+y/2<0$.
Thus no coordinate vanishes and no earlier pair returns to the
negative quadrant. Reading the second-coordinate signs at steps
$0,\ldots,6$ gives the action word $\mathit{BAABAAB}$.
The return pair is $(y,-x+y/2)=M(x,y)$, as claimed.

\subsection{Powers of the return matrix}\label{app:no-fixed-vector}

\begin{proof}[Proof of Part (2) of~\Cref{lem:return-matrix}]
Put
\[
N:=2M=\begin{pmatrix}0&2\\-2&1\end{pmatrix}.
\]
For every $k\ge1$, all entries of $N^k$ are even except its
lower-right entry, which is odd. This holds for $N$ itself.
To check the induction step, write
\[
N^k=\begin{pmatrix}a&b\\c&d\end{pmatrix}.
\]
Then direct multiplication gives
\[
N^{k+1}=NN^k
       =\begin{pmatrix}2c&2d\\c-2a&d-2b\end{pmatrix},
\]
which preserves the stated parities. In particular, $a+d$ is odd.

Since $\det N=4$, we also have $ad-bc=4^k$.
For either $\varepsilon\in\{-1,1\}$, expanding a determinant now gives
\begin{align*}
\det(N^k-\varepsilon2^kI)
 &=(a-\varepsilon2^k)(d-\varepsilon2^k)-bc\\
 &=2^k\bigl(2^{k+1}-\varepsilon(a+d)\bigr).
\end{align*}
The expression in parentheses is odd, so the determinant is nonzero.
Consequently, $(N^k-\varepsilon2^kI)v=0$ has only the solution $v=0$.
As $N^k=2^kM^k$, this says that $M^kv=\varepsilon v$ has no nonzero
solution, for either choice of sign.
\end{proof}

\subsection{Aperiodicity by summing return pairs}\label{app:aperiodicity-sums}

We give an elementary alternative to the final part of the proof
of~\Cref{thm:main-aperiodic}. As shown there, ultimate periodicity
would yield a negative pair $v_N$ and a matrix
$R=\varepsilon M^d$, with $\varepsilon\in\{-1,1\}$ and $d>0$,
such that every pair $R^jv_N$, for $j\ge0$, lies in the negative
quadrant and on the ellipse $H=1$.

We obtain a contradiction by summing these pairs.
For $x,y<0$ with $H(x,y)=1$ we have
$(x+y)^2\ge H(x,y)=1$, and thus $x+y\le-1$.
Let $S_q:=\sum_{j=0}^{q-1}R^jv_N$. Each summand has coordinate sum
at most $-1$, so the coordinate sum of $S_q$ is less than or equal to $-q$.
These vector sums therefore cannot remain bounded.

On the other hand, multiplying $S_q$ by $I-R$ makes the intermediate
terms cancel:
\[
(I-R)S_q
 =\sum_{j=0}^{q-1}\bigl(R^jv_N-R^{j+1}v_N\bigr)
 =v_N-R^qv_N.
\]
By Part~(2) of~\Cref{lem:return-matrix}, $R$ cannot fix
a nonzero vector, so $I-R$ is invertible. Hence
\[
S_q=(I-R)^{-1}(v_N-R^qv_N).
\]
The right-hand side is bounded independently of $q$, because both
$v_N$ and $R^qv_N$ lie on the same bounded ellipse, and multiplication
by the fixed matrix $(I-R)^{-1}$ preserves boundedness.
This is a contradiction. The signs, and hence the unique
optimal strategy, are not ultimately periodic.

\clearpage
\section{Ultimate periodicity with losses of at most two}
\label{app:21-ultimately-periodic}

\begin{proof}[Proof of~\Cref{cor:21-ultimately-periodic}]
The trivial cases from~\Cref{sec:defs} admit constant optimal
strategies: either some action never loses, or every action has
nonpositive expected gain. Otherwise apply the simplifications from
that section. Every remaining action permits a loss, zero gains have
been removed, and at least one action has positive expected gain.

We can also discard actions with $\Pbb_A(1)=0$. If a memoryless
strategy chose such an action at fortune $k$, then starting from $k$
the fortune could never exceed $k$: upward steps cannot skip $k$,
and the same action would be chosen on each return there.
Bounded play is ruined almost surely, whereas optimal survival from
$k$ has positive probability. Thus no optimal strategy uses such an
action. We may now apply~\Cref{prop:greedy}, breaking ties by a fixed
ordering of the actions.

Put $D_n:=Q_n-Q_{n-1}$, so $D_0=1$ and $D_{-1}=0$, and write
\[
\alpha_A:=\frac{\Pbb_A(-1)+\Pbb_A(-2)}{\Pbb_A(1)}>0,
\qquad
\beta_A:=\frac{\Pbb_A(-2)}{\Pbb_A(1)}\ge0.
\]
Since the probabilities sum to $1$, subtracting $Q_{n-1}$
from~\eqref{eq:Q-candidate} gives
\begin{align*}
Q_n(A)-Q_{n-1}
 &=\frac{\Pbb_A(-1)(Q_{n-1}-Q_{n-2})
         +\Pbb_A(-2)(Q_{n-1}-Q_{n-3})}{\Pbb_A(1)}\\
 &=\alpha_A D_{n-1}+\beta_A D_{n-2}.
\end{align*}
The choice at fortune $n$ therefore minimises this expression, and
\[
D_n=\min_{A\in\Acal}
          (\alpha_A D_{n-1}+\beta_A D_{n-2})\qquad(n\ge1).
\]
It follows inductively that $D_n>0$ for every $n\ge0$.

We can consequently form the ratios $t_n:=D_n/D_{n-1}$ for $n\ge1$.
Their recurrence is
\[
t_1=\min_A\alpha_A,\qquad
t_{n+1}=f(t_n),\qquad
f(t):=\min_A\left(\alpha_A+\frac{\beta_A}{t}\right).
\]
The action at fortune $n+1$ is the first action attaining the
minimum in $f(t_n)$. Each expression defining $f$ is nonincreasing
on $t>0$, so $f$ is nonincreasing too: applying it reverses
inequalities. Also $f(t)\ge t_1$ for all $t>0$, giving $t_3\ge t_1$.
Applying $f$ repeatedly now shows that
\[
t_1\le t_3\le t_5\le\ldots,
\qquad
t_2\ge t_4\ge t_6\ge\ldots.
\]

Finally, the expressions for two actions $A,B$ are equal exactly when
\[
(\alpha_A-\alpha_B)t+(\beta_A-\beta_B)=0.
\]
Unless the expressions are identical, there is at most one such
positive value of $t$. There are therefore only finitely many thresholds
at which the selected action can change. Each monotone subsequence
can pass each threshold at most once. If it stays at a threshold,
the fixed rule for breaking ties selects the same action thereafter.
The chosen action thus eventually stabilises on each parity of
fortunes, proving the claimed period bound.
Replacing each retained action by a fixed original representative
preserves both optimality and this eventual periodicity.
\end{proof}

\clearpage
\section{Computing the aperiodic strategy in polynomial time}
\label{app:aperiodic-polytime}

For the fixed $(3,1)$ example of~\Cref{thm:main-aperiodic}, we give
an algorithm computing $\sigma(n)$ in time polynomial in the bitsize
of $n$. Only the fortune $n$, given in binary, is the input.

\subsection{The return map as a rotation}

Recall the matrix $M$ and the ellipse $H=1$ from
\Cref{sec:return-maps}. Put
\[
\theta:=\arccos(1/4),\qquad L:=\pi-\theta,\qquad
\alpha:=\frac{\theta}{L}.
\]
Since $\pi/3<\theta<\pi/2$, we have $0<\theta<L$ and $0<\alpha<1$.
Moreover, $\alpha$ is irrational, since $\theta/\pi$ is irrational
by~\Cref{lem:return-matrix}.

The coordinate change
\[
P:=\frac14\begin{pmatrix}\sqrt{15}&0\\1&-4\end{pmatrix}
\quad\text{satisfies}\quad
\|Pv\|^2=H(v),\qquad
PMP^{-1}=\begin{pmatrix}1/4&-\sqrt{15}/4\\
                       \sqrt{15}/4&1/4\end{pmatrix}.
\]
Thus $P$ turns the ellipse into the unit circle and $M$ into a
counterclockwise rotation by $\theta$. The image of the negative-quadrant
arc has angles between $\pi/2$ and $\pi/2+L$. Measure the angle from
$\pi/2$, writing $z\in(0,L)$ for this position on the arc.
Applying $P$ to the boundary ray $y=2x<0$ gives $z=L-\theta$.
The seven-step return $M$ adds $\theta$, whereas the three-step
return $-M^2$ adds $2\theta-\pi=\theta-L$. Hence the return map is
\begin{equation}\label{eq:polytime-rotation}
z\longmapsto
\begin{cases}
z+\theta,&0<z<L-\theta,\\
z+\theta-L,&L-\theta<z<L.
\end{cases}
\end{equation}
This is simply addition of $\theta$ modulo $L$.

The initial negative pair is $v_2=(-1,-1/2)^{\mathsf T}$, and
$Pv_2=(-\sqrt{15}/4,1/4)^{\mathsf T}$, giving $z_0=\theta$.
Consequently, after $j$ returns,
\[
z_j=L\{(j+1)\alpha\},
\]
where $\{t\}=t-\lfloor t\rfloor$ denotes the fractional part.
Irrationality excludes the endpoints and the branch boundary,
since reaching them would make $(j+1)\alpha$ or $(j+2)\alpha$ an integer.

\subsection{Locating the required action}

Define
\[
b_j:=\lfloor(j+2)\alpha\rfloor-\lfloor(j+1)\alpha\rfloor
\in\{0,1\}\qquad(j\ge0).
\]
Here $b_j=1$ precisely when the addition in
\eqref{eq:polytime-rotation} passes the endpoint $L$, selecting
the three-step excursion. By~\eqref{eq:return-table}, the strategy is
\[
\sigma=\mathit{AB}\,W_0W_1W_2\ldots,\qquad
W_j:=\begin{cases}
\mathit{BAB},&b_j=1,\\
\mathit{BAABAAB},&b_j=0.
\end{cases}
\]
The initial $\mathit{AB}$ comes from $u_0=1$ and $u_1=-1$.
Each block $W_j$ has length $7-4b_j$, so the total length $S_k$
of the first $k$ blocks telescopes to
\begin{equation}\label{eq:polytime-block-lengths}
S_k=\sum_{j=0}^{k-1}(7-4b_j)
   =7k-4\lfloor(k+1)\alpha\rfloor\qquad(k\ge0).
\end{equation}
We used $\lfloor\alpha\rfloor=0$; in particular, $S_0=0$.

For $n=1,2$ the answers are $A,B$, respectively. For $n\ge3$,
find the unique $k$ such that $S_k\le n-3<S_{k+1}$.
Then $\sigma(n)$ is the letter at position $n-3-S_k$ of $W_k$,
counting positions from zero. Since every block has length $3$ or $7$,
the sequence $S_k$ is strictly increasing and $S_n\ge3n>n-3$.
Binary search between $0$ and $n$ therefore finds $k$ using
$O(\log n)$ evaluations of~\eqref{eq:polytime-block-lengths}.
One further floor evaluation determines $b_k$, and hence $W_k$.
It remains to justify that the floors themselves can be evaluated
exactly in time polynomial in $\log n$.

\subsection{Evaluating the floors exactly}

Irrationality alone tells us that $m\alpha$ is not an integer;
we also need to know how close it can be to one. Put
\[
\lambda:=e^{i\theta}=\frac{1+i\sqrt{15}}4.
\]
This is an algebraic number and is not a root of unity.
For an integer $m\ge1$, choose an integer $h$ nearest to $m\alpha$.
Since $0<\alpha<1$, we have $0\le h\le m$. Taking principal
complex logarithms gives
\[
\Lambda:=(m+h)\log\lambda-h\log(-1)
       =i\bigl((m+h)\theta-h\pi\bigr)
       =iL(m\alpha-h).
\]
This form is nonzero, since $\alpha$ is irrational. The
Baker--W\"ustholz theorem~\cite{BakerWustholz:1993} bounds a nonzero
integer linear form in fixed logarithms of algebraic numbers below
by an inverse power of the maximum absolute value of its integer coefficients.
Here the algebraic numbers $\lambda,-1$ are fixed and the
coefficients have absolute value at most $2m$. It follows that
there are effectively computable constants $c,C>0$ such that
\begin{equation}\label{eq:polytime-floor-separation}
\min_{h\in\mathbb Z}|m\alpha-h|>c\,m^{-C}
\qquad(m\ge1).
\end{equation}

To compute $\lfloor m\alpha\rfloor$, enclose $\alpha$ in a rational
interval, doubling the requested precision until the two endpoints,
after multiplication by $m$, have the same floor.
By~\eqref{eq:polytime-floor-separation},
this happens once the interval has width less than
$c\,m^{-(C+1)}/2$. Thus $O(\log(m+1))$ bits of precision suffice.
The extra factor $m^{-1}$ accounts for multiplying the error by $m$.
The algorithm need not calculate $c$ or $C$: they bound the precision
at which its stopping test must succeed.

The required approximations of $\alpha$ can be computed in time
polynomial in their precision. For example, use
\[
\pi=6\arcsin(1/2),\qquad
\theta=\pi/2-\arcsin(1/4).
\]
The first identity is just $\sin(\pi/6)=1/2$. At either argument,
each term of the arcsine Taylor series is at most one quarter of
the preceding term, so the tail has a geometric bound.
Hence $O(b)$ rational terms give certified
$b$-bit approximations using integers of polynomial bit length.
Forming $\alpha=\theta/(\pi-\theta)$ preserves this bound, since
the denominator is bounded away from zero.

Every floor needed by the binary search has $m=O(n)$, and all
remaining integer calculations involve $O(\log n)$ bits.
There are $O(\log n)$ floor evaluations, each taking time polynomial
in $\log n$. This proves the claimed polynomial bound in the
bitsize of the input fortune.

\end{document}